\newtheorem{theorem}{Theorem}
\newtheorem{corollary}[theorem]{Corollary}
\newtheorem{lemma}[theorem]{Lemma}
\newcommand{\comment}[1]{}
\newcommand\bs[1]{\boldsymbol{#1}}
\begin{document}
\title{Deterministic metric $1$-median selection with very few queries
\footnote{Part of this paper appears in {\em Proceedings of the 27th International
Computing and Combinatorics
Conference} (COCOON~2021).}}

\author{
Ching-Lueh Chang\footnote{Department of Computer Science and
Engineering,
Yuan Ze University, Taoyuan, Taiwan.
clchang@saturn.yzu.edu.tw}
}



\maketitle

\begin{abstract}
Given an $n$-point metric space $(M,d)$,
{\sc metric $1$-median} asks for a point $p\in M$ minimizing
$\sum_{x\in M}\,d(p,x)$.
We show that for each computable function $f\colon \mathbb{Z}^+\to\mathbb{Z}^+$
satisfying $f(n)=\omega(1)$,
{\sc metric $1$-median} has a deterministic, $o(n)$-query,
$o(f(n)\cdot\log n)$-approximation and nonadaptive algorithm.
Previously, no deterministic $o(n)$-query $o(n)$-approximation
algorithms are known for {\sc metric $1$-median}.
On the negative side,
we
prove each
deterministic
$O(n)$-query
algorithm
for
{\sc metric $1$-median}
to be not $(\delta\log n)$-approximate for a sufficiently small constant $\delta>0$.
We also refute the existence of
deterministic $o(n)$-query $O(\log n)$-approximation algorithms.
\end{abstract}

\noindent
\textbf{Keywords}: metric space; 1-median; median selection; query complexity; sublinear algorithm; sublinear computation

\section{Introduction}

An $n$-point metric space $(M,d)$ is a size-$n$ set $M$ endowed with a distance
function $d\colon M\times M\to[0,\infty)$ such that
\begin{itemize}
\item $d(x,y)=0$ if and only if $x=y$,
\item $d(x,y)=d(y,x)$, and
\item $d(x,y)+d(y,z)\ge d(x,z)$ (triangle inequality)
\end{itemize}
for all $x$, $y$, $z\in M$~\cite{Rud76}.
{\sc Metric $1$-median} asks for a point $p\in M$
minimizing $\sum_{x\in M}\,d(p,x)$.
Clearly, it has a brute-force $O(n^2)$-time algorithm.
Furthermore, it generalizes the classical median selection~\cite{CLRS09}
and can be generalized further to metric $k$-median clustering.
In social network analysis, {\sc metric $1$-median}
asks for an actor with the maximum closeness centrality~\cite{WF94}.
For all $\beta\ge1$, a $\beta$-approximate $1$-median of $(M,d)$
is a point $p\in M$ satisfying
$\sum_{y\in M}\,d(p,y)\le \beta\cdot\min_{q\in M}\sum_{y\in M}\,d(q,y)$.
By convention, a $\beta$-approximation algorithm for
{\sc metric $1$-median} must output a
$\beta$-approximate $1$-median of $(M,d)$.
A query inspects $d(x,y)$ for some $x$, $y\in M$.
An algorithm is nonadaptive if
its
$i$th query $(x_i,y_i)\in M^2$
is independent of the
answers to the
first $i-1$ queries, for all $i>1$.
Write $d_G$ for the distance function induced by an undirected graph $G$.
\comment{ 
By convention, $d(x,S)\equiv \inf_{s\in S}\,d(x,s)$ for all $x\in M$ and $S\subseteq M$.
}

Indyk~\cite{Ind99,Ind00} gives a Monte Carlo $O(n/\epsilon^2)$-time
$(1+\epsilon)$-approximation
algorithm for {\sc metric $1$-median}, where $\epsilon>0$.
His time complexity is optimal w.r.t.\ $n$.
When restricted to $\mathbb{R}^D$,
{\sc metric $1$-median} has a Monte Carlo $O(D\cdot\exp(\text{poly}(1/\epsilon)))$-time
$(1+\epsilon)$-approximation algorithm~\cite{KSS10}.
The more general
$k$-median clustering in metric spaces has streaming approximation
algorithms~\cite{GMMMO03},
requires $\Omega(nk)$ time for $O(1)$-approximations~\cite{MP04}
and is inapproximable to within
$(1+2/e-\Omega(1))$ unless
$\text{NP}\subseteq \text{DTIME}(n^{O(\log\log n)})$~\cite{JMS02}.
For
$\mathbb{R}^D$
and graph metrics,
a well-studied
problem is to find
the average distance from a query point to a finite set of
points~\cite{BMM03,EW04,GR08}.

Deterministic
$\omega(n)$-query
computation is almost completely understood
for {\sc metric $1$-median}:
For all constants $\epsilon\in(0,1)$,
the
best approximation
ratio achievable by deterministic
$o(n^2)$-query
and
$O(n^{1+\epsilon})$-query algorithms is
$4$ and
$2\lceil 1/\epsilon\rceil$,
respectively~\cite{Cha17,Cha18,Wu14}.
The same holds with ``query'' replaced by ``time'' and
regardless of whether the algorithms can be adaptive~\cite{Cha17,Cha18}.
In contrast, we
study the largely unknown
deterministic
$O(n)$- or
$o(n)$-query
computation.
An
$o(n)$-query
algorithm
enjoys the strength of ignoring
a $1-o(1)$
fraction of
points.
\comment{ 
Instead,
$o(n^2)$-query algorithms
ignore a $1-o(1)$
fraction of all $\binom{n}{2}$
distances.
}

It is folklore that every point is an $(n-1)$-approximate $1$-median.
Surprisingly, this is
the
current
best upper bound
for
deterministic $o(n)$-query
algorithms.
In particular, no deterministic $o(n)$-query $o(n)$-approximation algorithms
are known for {\sc metric $1$-median}.
\comment{ 
Currently, the best lower bound against deterministic $o(n)$-query
algorithms is that they cannot be $O(1)$-approximate; this remains
true with ``$o(n)$''
replaced by ``$O(n)$''~\cite{Cha18}.
}
Instead, we
give
a deterministic, $o(n)$-query,
$o(f(n)\cdot \log n)$-approximation and nonadaptive algorithm for each
computable function $f\colon \mathbb{Z}^+\to\mathbb{Z}^+$ satisfying
$f(n)=\omega(1)$.
So, e.g., {\sc metric $1$-median} has a deterministic $o(n)$-query
$o(\alpha(n)\cdot\log n)$-approximation algorithm for the very slowly growing
inverse Ackermann function $\alpha(\cdot)$.
\comment{ 
Previously, no
deterministic $o(n)$-query $o(n)$-approximation algorithms are known.
}
Our main technical discovery
is that a $\beta$-approximate $1$-median of
$(S,d\vert_{S\times S})$ (where $d\vert_{S\times S}$ denotes
$d$ restricted to $S\times S$) is an $O(\beta n/|S|)$-approximate
$1$-median of $(M,d)$, for all $\emptyset\subsetneq S\subseteq M$ and $\beta\ge1$.
When $S\subseteq M$ is a uniformly random set of a sufficiently large size,
an approximate solution to metric $k$-median clustering
for $(S,d\vert_{S\times S})$
is
a
good
one
for $(M,d)$ with high probability~\cite{CS07}.
But our discovery is for {\em any}
$S$ and is new.

Chang~\cite{Cha17ICASI}
shows that {\sc metric $1$-median} has a deterministic,
$O(\exp(O(1/\epsilon))\cdot n\log n)$-time, $O(\exp(O(1/\epsilon))\cdot n)$-query,
$(\epsilon\log n)$-approximation and nonadaptive algorithm, for all $\epsilon>0$.
So deterministic $O(n)$-query algorithms can be $(\epsilon\log n)$-approximate
for each $\epsilon>0$.
Currently, the best lower bound against deterministic $O(n)$-query
algorithms is that they cannot be $O(1)$-approximate~\cite{Cha18}.
So there is a huge gap between
Chang's~\cite{Cha17ICASI}
approximation ratio of $\epsilon\log n$ and the current best lower bound.
We close the gap by
showing
each
deterministic
$O(n)$-query algorithm
for {\sc metric $1$-median}
to be not
$(\delta\log n)$-approximate
for a sufficiently small constant $\delta>0$ (depending on the algorithm).
Our approach, sketched below, adversarially answers the queries of a deterministic
$O(n)$-query algorithm {\sf Alg}:
\begin{enumerate}[(I)]
\item\label{startwithalledgesintuition} Start with the complete graph on $M$.
\item\label{markexpanderinitiallyintuition} Mark all edges in an $O(1)$-regular
expander graph as permanent.
\item Repeat the following:
  \begin{enumerate}[(1)]
  \item\label{answerwithshortestintuition} Upon receiving a query $(a,b)\in M^2$, find a shortest $a$-$b$ path $P$
  and answer by the length of $P$.
  \item\label{markaspermanentintuition} Mark all edges of $P$ as permanent.
  \item\label{keepingdegreessmall} For each vertex $v$ incident to too many permanent edges,
  remove all non-permanent edges incident to $v$.
  \end{enumerate}
\end{enumerate}
Intuitively,
item~(\ref{keepingdegreessmall}) keeps degrees small, thus forcing the output
of
{\sf Alg} to have a large average distance to other points.
Because item~(\ref{answerwithshortestintuition}) answers a query by the length of $P$,
items~(\ref{markaspermanentintuition})--(\ref{keepingdegreessmall})
must preserve all edge of $P$ (by marking them as permanent and not removing
them) for the consistency in answering future queries.
Items~(\ref{startwithalledgesintuition})~and~(\ref{answerwithshortestintuition})--(\ref{keepingdegreessmall})
follow Chang's~\cite{Cha18} paradigm.
To prove a lower bound against {\sf Alg}, we shall make the output of {\sf Alg}
a lot worse than a $1$-median, presumably
by identifying or planting
a vertex with a sufficiently small average distance
to other points.
However, Chang fails in this respect.
We overcome his problem by
item~(\ref{markexpanderinitiallyintuition}), which allows a vertex to have an
$O(1)$ average distance to other vertices.
\comment{ 
Like
existing
lower bounds,
we use
the adversarial method
(see~\cite{Cha18} and the references therein).
Nonetheless,
our lower bound
is not
an easy corollary of any existing
result.
}

An extension of our
lower bound
forbids
each deterministic $o(n)$-query algorithm
for
{\sc metric $1$-median}
to be $o(f(n)\cdot\log n)$-approximate
for some computable function $f\colon \mathbb{Z}^+\to\mathbb{Z}^+$
satisfying
$f(n)=\omega(1)$.
In particular,
deterministic $o(n)$-query $O(\log n)$-approximation algorithms
do not exist.
\comment{ 
Furthermore,
our
deterministic $o(n)$-query
$o(f(n)\cdot\log n)$-approximation algorithm (where $f(n)=\omega(1)$
is any computable function)
cannot be improved.
}
Previously, the best lower bound against deterministic
$o(n)$-query algorithms $A$
is folklore and
forbids
$A$ to be
$h_A(n)$-approximate for some $h_A(n)=\omega(1)$.\footnote{For
a sketch of proof, answer all queries of $A$ by $1$ and put all points not
involved in the queries to be
extremely close to one another
but
extremely far away from $A$'s output and from the points involved in the queries.}
So previous works
do not yet refute
the existence of
deterministic
$o(n)$-query
$O(\alpha(n))$-approximation
algorithms,
where $\alpha(\cdot)$ is the very slowly growing inverse
Ackermann function.

Chang~\cite{Cha18ICS}'s adversarial method shows that {\sc metric $1$-median}
has no deterministic $O(n)$-query
$o(\log n)$-approximation
algorithms that make each point involve in
$O(1)$ queries to $d$.
But his adversary is rather na{\"\i}ve and does not seem to yield any
unconditional lower bound such as ours.

\section{Upper bound}

Take an $n$-point metric space $(M,d)$ and $\emptyset \subsetneq S\subseteq M$.
Define
\begin{eqnarray*}
x^*&\equiv&\mathop{\mathrm{argmin}}_{x\in M}\,\sum_{y\in M}\,d(x,y),\\
x^*_S&\equiv&\mathop{\mathrm{argmin}}_{x\in S}\,\sum_{y\in S}\,d(x,y)
\end{eqnarray*}
to be a $1$-median of $(M,d)$ and $(S,d\vert_{S\times S})$, respectively,
breaking ties arbitrarily.
Furthermore, pick $\bs{u}$ and $\bs{v}$ independently and uniformly
at random from $S$.
So
$$
\bar{r}_S\equiv \mathop{E}\left[\,d\left(\bs{u},\bs{v}\right)\,\right]
$$
is the average distance in $(S,d\vert_{S\times S})$.

\begin{lemma}\label{localoptimalquality}
$$\sum_{y\in S}\,d\left(x^*,y\right)\ge \frac{|S|\,\bar{r}_S}{2}.$$
\end{lemma}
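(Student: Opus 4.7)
The plan is to observe that the stated inequality is essentially a triangle-inequality fact about the point $x^*$ acting as a ``pivot'' for pairs in $S$, and that the defining property of $x^*$ (being a $1$-median of $(M,d)$) is not actually used; any point of $M$ would work.

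First I would unfold the definition of $\bar r_S$ as
\[
\bar r_S \;=\; \frac{1}{|S|^2}\sum_{u\in S}\sum_{v\in S}\, d(u,v),
\]
so the target inequality is equivalent to
\[
\sum_{u\in S}\sum_{v\in S}\, d(u,v)\;\le\; 2|S|\sum_{y\in S}\, d(x^*,y).
\]

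Next I would apply the triangle inequality pointwise: for every $u,v\in S$,
\[
d(u,v)\;\le\; d(u,x^*)+d(x^*,v).
\]
Summing over all $(u,v)\in S\times S$ and using symmetry (each of the two terms on the right contributes $|S|\sum_{y\in S} d(x^*,y)$ after the outer sum), I would obtain
\[
\sum_{u\in S}\sum_{v\in S}\, d(u,v)\;\le\; |S|\sum_{u\in S} d(u,x^*)\;+\;|S|\sum_{v\in S} d(x^*,v)\;=\;2|S|\sum_{y\in S} d(x^*,y).
\]
Dividing by $2|S|$ and substituting back the expression for $\bar r_S$ finishes the proof.

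There is no real obstacle here; the only thing to be careful about is that the double sum defining $\bar r_S$ runs over all of $S\times S$ (including the diagonal, but $d(u,u)=0$ so that costs nothing), and that the argument applies to any point of $M$, not just to a $1$-median. The lemma is really a statement that the sum-of-distances from any fixed point to $S$ is at least half the average inter-point distance in $S$ times $|S|$, which foreshadows the main technical discovery announced in the introduction: a good $1$-median of the sample $S$ transfers to an $O(n/|S|)$-factor approximate $1$-median of $(M,d)$.
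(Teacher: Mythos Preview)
Your proof is correct and is essentially the same as the paper's: both apply the triangle inequality $d(u,v)\le d(u,x^*)+d(x^*,v)$ and average/sum over $(u,v)\in S\times S$, the only cosmetic difference being that the paper phrases the double sum as an expectation over the uniform random pair $(\bs{u},\bs{v})$. Your remark that the optimality of $x^*$ is not used is also accurate.
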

\begin{proof}
We have
\begin{eqnarray*}
\sum_{y\in S}\,d\left(x^*,y\right)
&=&|S|\cdot \mathop{E}\left[\,d\left(x^*,\bs{u}\right)\,\right]\\
&=&\frac{1}{2}\cdot\left(
|S|\cdot \mathop{E}\left[\,d\left(x^*,\bs{u}\right)\,\right]
+|S|\cdot \mathop{E}\left[\,d\left(x^*,\bs{v}\right)\,\right]
\right)\\
&\ge& \frac{1}{2}\cdot
|S|\cdot \mathop{E}\left[\,d\left(\bs{u},\bs{v}\right)\,\right].
\end{eqnarray*}
\end{proof}

\begin{lemma}\label{localoptimalupperbound}
$$
\sum_{y\in S}\, d\left(x^*_S,y\right)\le |S|\,\bar{r}_S.
$$
\end{lemma}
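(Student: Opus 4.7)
The plan is to use a straightforward averaging argument: since $x^*_S$ minimizes $\sum_{y \in S} d(x, y)$ over $x \in S$, its value of this sum cannot exceed the average of this sum over all $x \in S$.

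First I would rewrite $\bar{r}_S$ explicitly as a double sum, namely
$$
\bar{r}_S = \frac{1}{|S|^2} \sum_{u \in S}\sum_{v \in S} d(u,v),
$$
which just unpacks the definition of the expectation over independent uniform $\boldsymbol{u},\boldsymbol{v} \in S$. Equivalently, $\sum_{u,v \in S} d(u,v) = |S|^2 \, \bar{r}_S$.

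Next I would invoke the defining property of $x^*_S$: since $x^*_S$ is the minimizer of $x \mapsto \sum_{y \in S} d(x,y)$ over $S$, we have
$$
\sum_{y \in S} d\bigl(x^*_S, y\bigr) \le \frac{1}{|S|} \sum_{x \in S} \sum_{y \in S} d(x,y),
$$
i.e.\ the minimum is at most the mean. Substituting the identity from the first step gives $\sum_{y \in S} d(x^*_S, y) \le |S| \, \bar{r}_S$, which is exactly the claimed inequality.

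There is essentially no obstacle here; the only thing to watch is bookkeeping of the factor of $|S|$ coming from the expectation versus the sum, and making sure that $\boldsymbol{u}$ and $\boldsymbol{v}$ are indeed uniform over $S$ (not $M$) so that the double sum runs over $S \times S$. The whole argument is a two-line min-versus-average calculation, parallel in spirit to Lemma~\ref{localoptimalquality} (which in turn used a triangle-inequality step that is not needed here).
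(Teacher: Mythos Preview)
Your proposal is correct and is essentially identical to the paper's own proof: both observe that the minimum of $x\mapsto\sum_{y\in S}d(x,y)$ over $S$ is at most its average, then identify that average as $|S|\,\bar r_S$. The only cosmetic difference is that the paper writes the averaging step in expectation notation (using the uniform variables $\bs u,\bs v$) while you unpack it as an explicit double sum.
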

\begin{proof}
By the optimality of $x^*_S$,
$$
\sum_{y\in S}\, d\left(x^*_S,y\right)
\le \mathop{E}\left[\,\sum_{y\in S}\, d\left(\bs{u},y\right)\,\right].
$$
Clearly,
$$
\mathop{E}\left[\,\sum_{y\in S}\, d\left(\bs{u},y\right)\,\right]
=|S|\cdot
\mathop{E}\left[\,d\left(\bs{u},\bs{v}\right)\,\right].
$$
\end{proof}

For all $x'_S\in S$,
\begin{eqnarray}
\sum_{y\in M}\,d\left(x'_S,y\right)
\le
\sum_{y\in M}\,\left(d\left(x'_S,x^*\right)+d\left(x^*,y\right)\right)
=
n\cdot d\left(x'_S,x^*\right)
+\sum_{y\in M}\,d\left(x^*,y\right).\label{dontknowhowtoname1}
\end{eqnarray}

The next two lemmas constitute our main discovery.

\begin{lemma}\label{localvsglobalclosertooptimal}
For all $x'_S\in S$ and $\beta\ge1$ satisfying
$\sum_{y\in S}\,d(x'_S,y)\le \beta\cdot \sum_{y\in S}\,d(x^*_S,y)$
and
$d(x'_S,x^*)\le 2\beta \bar{r}_S$,
$x'_S$ is an $O(\beta n/|S|)$-approximate $1$-median of $(M,d)$.
\end{lemma}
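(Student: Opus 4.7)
The plan is to combine the displayed inequality~(\ref{dontknowhowtoname1}) with Lemma~\ref{localoptimalquality}, converting both appearances of $x'_S$ and $\bar{r}_S$ into quantities that refer only to the global $1$-median $x^*$ and the set $M$.

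First I would apply~(\ref{dontknowhowtoname1}) to obtain $\sum_{y\in M} d(x'_S,y) \le n\cdot d(x'_S,x^*) + \sum_{y\in M} d(x^*,y)$. The second hypothesis of the lemma, $d(x'_S,x^*) \le 2\beta\bar{r}_S$, then upgrades the first term to at most $2\beta n\,\bar{r}_S$. Next I would rewrite $\bar{r}_S$ using Lemma~\ref{localoptimalquality}, which rearranges to $\bar{r}_S \le (2/|S|)\sum_{y\in S} d(x^*,y) \le (2/|S|)\sum_{y\in M} d(x^*,y)$. Substituting yields
\[
\sum_{y\in M} d(x'_S,y) \;\le\; \left(\frac{4\beta n}{|S|} + 1\right) \sum_{y\in M} d(x^*,y),
\]
and since $\beta\ge 1$ and $|S|\le n$ force $4\beta n/|S| \ge 1$, the right-hand side is $O(\beta n/|S|)\sum_{y\in M} d(x^*,y)$, which is the desired conclusion.

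This calculation is essentially routine; I do not expect any real obstacle beyond tracking the constant inside the $O(\cdot)$. One thing worth flagging is that the first hypothesis, $\sum_{y\in S} d(x'_S,y)\le \beta\sum_{y\in S} d(x^*_S,y)$, does not appear to be needed by the above sketch. I suspect it is either stated for consistency with how $x'_S$ is obtained in the algorithmic application (namely as a $\beta$-approximate $1$-median of $(S,d\vert_{S\times S})$, which is what Lemma~\ref{localoptimalupperbound} is tailored for) or used in a companion lemma to actually guarantee the closeness condition $d(x'_S,x^*) \le 2\beta\bar{r}_S$. Either way, my proof only uses the second hypothesis together with Lemma~\ref{localoptimalquality} and the triangle-inequality bound~(\ref{dontknowhowtoname1}).
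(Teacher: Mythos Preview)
Your argument is correct and is essentially identical to the paper's proof: both combine inequality~(\ref{dontknowhowtoname1}) with Lemma~\ref{localoptimalquality} (rearranged as $\bar{r}_S\le (2/|S|)\sum_{y\in S}d(x^*,y)$) and the hypothesis $d(x'_S,x^*)\le 2\beta\bar{r}_S$ to obtain the bound $(4\beta n/|S|+1)\sum_{y\in M}d(x^*,y)$. Your observation about the first hypothesis is also right---the paper does not use it in this lemma either; it appears only in the companion Lemma~\ref{localvsglobalfurtherfromoptimal}.
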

\begin{proof}
By Lemma~\ref{localoptimalquality},
\begin{eqnarray}
n\cdot d\left(x'_S,x^*\right)
\le
n\cdot d\left(x'_S,x^*\right)
\cdot\frac{2}{|S|\,\bar{r}_S}
\cdot\sum_{y\in S}\,d\left(x^*,y\right).
\label{dontknowhowtoname2}
\end{eqnarray}
As $d(x'_S,x^*)\le 2\beta \bar{r}_S$ and $S\subseteq M$,
$$
\sum_{y\in M}\,d\left(x'_S,y\right)
\le O\left(\frac{\beta n}{|S|}\right)\cdot
\sum_{y\in M}\,d\left(x^*,y\right)
$$
by equations~(\ref{dontknowhowtoname1})--(\ref{dontknowhowtoname2}).
\end{proof}

\begin{lemma}\label{localvsglobalfurtherfromoptimal}
For all $x'_S\in S$ and $\beta\ge1$ satisfying
$\sum_{y\in S}\,d(x'_S,y)\le \beta\cdot \sum_{y\in S}\,d(x^*_S,y)$
and
$d(x'_S,x^*)>2\beta \bar{r}_S$,
$x'_S$ is an $O(n/|S|)$-approximate $1$-median of $(M,d)$.
\end{lemma}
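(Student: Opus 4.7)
The plan is to mirror the proof of Lemma~\ref{localvsglobalclosertooptimal} but exploit the hypothesis $d(x'_S,x^*)>2\beta\bar{r}_S$ in the reverse direction: there the inequality was used to upper-bound $d(x'_S,x^*)$ by $\bar{r}_S$; here the same inequality lets us upper-bound $\bar{r}_S$ by $d(x'_S,x^*)/(2\beta)$, which turns out to be exactly what is needed to absorb a term back into the left-hand side.

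Concretely, first I would use the triangle inequality pointwise and sum over $S$ to get
$$|S|\cdot d(x'_S,x^*)\le \sum_{y\in S} d(x'_S,y)+\sum_{y\in S} d(x^*,y).$$
Next, I would chain the hypothesis $\sum_{y\in S} d(x'_S,y)\le \beta\sum_{y\in S} d(x^*_S,y)$ with Lemma~\ref{localoptimalupperbound} to obtain $\sum_{y\in S} d(x'_S,y)\le \beta|S|\bar{r}_S$. Rewriting the hypothesis $d(x'_S,x^*)>2\beta\bar{r}_S$ as $\beta\bar{r}_S<d(x'_S,x^*)/2$ then yields $\sum_{y\in S} d(x'_S,y)< |S|\cdot d(x'_S,x^*)/2$. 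Substituting into the previous display and moving the $|S|\cdot d(x'_S,x^*)/2$ term to the left gives
$$\frac{|S|}{2}\cdot d(x'_S,x^*)\le \sum_{y\in S} d(x^*,y)\le \sum_{y\in M} d(x^*,y),$$
so $n\cdot d(x'_S,x^*)\le (2n/|S|)\sum_{y\in M} d(x^*,y)$.

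Finally, I would feed this into inequality~(\ref{dontknowhowtoname1}) to conclude
$$\sum_{y\in M} d(x'_S,y)\le n\cdot d(x'_S,x^*)+\sum_{y\in M} d(x^*,y)\le O(n/|S|)\cdot \sum_{y\in M} d(x^*,y),$$
which is precisely the $O(n/|S|)$-approximation claim. I do not anticipate any real obstacle: the proof is a three-line manipulation. The only conceptual point worth highlighting is that the constant $\beta$ disappears from the final bound, because the case assumption $d(x'_S,x^*)>2\beta\bar{r}_S$ is strong enough to fold the entire $\beta$-dependence into the absorbed term. This complements Lemma~\ref{localvsglobalclosertooptimal}, where the other case assumption forced $\beta$ to survive in the approximation ratio.
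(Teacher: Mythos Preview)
Your proposal is correct and follows essentially the same route as the paper: the paper applies the triangle inequality in the rearranged form $\sum_{y\in S}d(x^*,y)\ge |S|\,d(x'_S,x^*)-\sum_{y\in S}d(x'_S,y)$, bounds $\sum_{y\in S}d(x'_S,y)\le\beta|S|\bar r_S$ via Lemma~\ref{localoptimalupperbound}, uses $d(x'_S,x^*)>2\beta\bar r_S$ to absorb that term, and then plugs the resulting bound $n\,d(x'_S,x^*)<(2n/|S|)\sum_{y\in S}d(x^*,y)$ into inequality~(\ref{dontknowhowtoname1}). Your only deviation is the innocuous writing of $\le$ where the chain actually gives a strict inequality.
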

\begin{proof}
By the triangle inequality,
\begin{eqnarray}
\sum_{y\in S}\,d\left(x^*,y\right)
\ge \sum_{y\in S}\,\left(d\left(x'_S,x^*\right)-d\left(x'_S,y\right)\right)
=|S|\cdot d\left(x'_S,x^*\right)-\sum_{y\in S}\, d\left(x'_S,y\right).
\label{againdontknowhowtoname1}
\end{eqnarray}
Furthermore,
\begin{eqnarray}
\sum_{y\in S}\, d\left(x'_S,y\right)
\le \beta\cdot \sum_{y\in S}\, d\left(x^*_S,y\right)
\stackrel{\text{Lemma~\ref{localoptimalupperbound}}}{\le}
\beta\, |S|\,\bar{r}_S.
\label{againdontknowhowtoname2}
\end{eqnarray}
As
$d(x'_S,x^*)>2\beta \bar{r}_S$,
$$
\sum_{y\in S}\,d\left(x^*,y\right)
\stackrel{\text{(\ref{againdontknowhowtoname1})--(\ref{againdontknowhowtoname2})}}{\ge}
|S|\cdot d\left(x'_S,x^*\right)
-\beta\, |S|\,\bar{r}_S
> \frac{|S|}{2}\cdot d\left(x'_S,x^*\right).
$$
So
$$
n\cdot d\left(x'_S,x^*\right)
=\frac{2n}{|S|}\cdot \frac{|S|}{2}\cdot d\left(x'_S,x^*\right)
< \frac{2n}{|S|}\cdot \sum_{y\in S}\,d\left(x^*,y\right).
$$
This and equation~(\ref{dontknowhowtoname1}) imply
$$
\sum_{y\in M}\,d\left(x'_S,y\right)
\le O\left(\frac{n}{|S|}\right)\cdot
\sum_{y\in M}\,d\left(x^*,y\right).
$$
\end{proof}

Lemmas~\ref{localvsglobalclosertooptimal}--\ref{localvsglobalfurtherfromoptimal}
imply the following.

\begin{lemma}\label{localvsglobalratio}
For all $\beta\ge1$,
every $\beta$-approximate $1$-median of $(S,d\vert_{S\times S})$
is an
$O(\beta n/|S|)$-approximate $1$-median of $(M,d)$.
\end{lemma}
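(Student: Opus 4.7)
The plan is to derive Lemma~\ref{localvsglobalratio} as an immediate case analysis based on the two preceding lemmas. Let $x'_S$ be an arbitrary $\beta$-approximate 1-median of $(S, d\vert_{S\times S})$, so by definition $\sum_{y\in S}\,d(x'_S,y)\le \beta\cdot \sum_{y\in S}\,d(x^*_S,y)$. The quantity that distinguishes the two regimes in Lemmas~\ref{localvsglobalclosertooptimal} and~\ref{localvsglobalfurtherfromoptimal} is how $d(x'_S,x^*)$ compares to $2\beta\bar{r}_S$, so I would split on exactly that threshold.

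First, if $d(x'_S,x^*)\le 2\beta\bar{r}_S$, then both hypotheses of Lemma~\ref{localvsglobalclosertooptimal} are met, and it directly yields that $x'_S$ is an $O(\beta n/|S|)$-approximate 1-median of $(M,d)$. Second, if $d(x'_S,x^*)> 2\beta\bar{r}_S$, then both hypotheses of Lemma~\ref{localvsglobalfurtherfromoptimal} are met, giving that $x'_S$ is an $O(n/|S|)$-approximate 1-median of $(M,d)$. Since $\beta\ge1$, the bound $O(n/|S|)$ is subsumed by $O(\beta n/|S|)$, so in either case the claimed approximation ratio holds.

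There is really no obstacle beyond organizing the dichotomy cleanly: the heavy lifting has already been done in Lemmas~\ref{localvsglobalclosertooptimal}--\ref{localvsglobalfurtherfromoptimal}, whose hypotheses were chosen precisely so that their disjunction covers every $\beta$-approximate local median. Thus the entire proof reduces to introducing $x'_S$, performing the two-way split on the value of $d(x'_S,x^*)$, and invoking the appropriate lemma in each branch.
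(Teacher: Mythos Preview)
Your proposal is correct and matches the paper's own argument exactly: the paper simply states that Lemmas~\ref{localvsglobalclosertooptimal}--\ref{localvsglobalfurtherfromoptimal} imply Lemma~\ref{localvsglobalratio}, which is precisely the dichotomy on $d(x'_S,x^*)\lessgtr 2\beta\bar{r}_S$ that you spell out.
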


The following theorem is due to Chang~\cite{Cha17ICASI}.

\begin{theorem}[\cite{Cha17ICASI}]\label{squarequeryalgorithm}
For all constants $\epsilon>0$,
{\sc metric $1$-median} has a deterministic,
$O(\exp(O(1/\epsilon))\cdot n\log n)$-time,
$(\exp(O(1/\epsilon))\cdot n)$-query,
$O(\epsilon\cdot\log n)$-approximation and
nonadaptive algorithm.
\end{theorem}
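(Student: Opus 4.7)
The plan is to build a multi-level tournament with branching factor $k := \lceil e^{c/\epsilon}\rceil$ for a suitable constant $c$, so that $\log_k n = \Theta(\epsilon \log n)$. At level $0$, partition $M$ into $n/k$ disjoint blocks of size $k$, and in each block query all $\binom{k}{2}$ pairs to identify the local $1$-median; at level $i\ge 1$, repartition the previous level's $n/k^i$ winners into blocks of size $k$ and repeat, stopping at level $t=\lceil\log_k n\rceil$ with a single surviving candidate, which is output. The block structure and intra-block query pattern at every level are fixed in advance from $n$, so the algorithm is nonadaptive; the query count at level $i$ is $O(k^2\cdot n/k^{i+1})$, which telescopes to $O(nk)=\exp(O(1/\epsilon))\cdot n$, and a running time of $O(\exp(O(1/\epsilon))\cdot n\log n)$ is easily seen to suffice for the bookkeeping.

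For the approximation analysis I would trace the ``ancestor chain'' of the true $1$-median $x^\ast$: let $x^\ast_0:=x^\ast$ and, for each $i$, let $x^\ast_{i+1}$ be the local $1$-median of the level-$i$ block containing $x^\ast_i$, while $\tilde x$ denotes the output (the surviving candidate after level $t-1$). Combining the optimality of $x^\ast_{i+1}$ within its block $B_i$ with the triangle inequality should yield a per-level drift bound of the form $d(x^\ast_i,x^\ast_{i+1})\le (2/k)\sum_{y\in B_i}d(x^\ast_i,y)$, exactly as in the proof of Lemma~\ref{localoptimalquality}. Telescoping $d(x^\ast,\tilde x)\le\sum_{i=0}^{t-1}d(x^\ast_i,x^\ast_{i+1})$ and plugging into inequality~(\ref{dontknowhowtoname1}) reduces the task to upper-bounding $n\sum_i d(x^\ast_i,x^\ast_{i+1})$ in terms of $\mathrm{OPT}:=\sum_{y\in M}d(x^\ast,y)$.

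The main obstacle is controlling this sum so that it contributes only $O(\epsilon\log n)\cdot\mathrm{OPT}$ rather than blowing up multiplicatively over the $t=\Theta(\epsilon\log n)$ levels. The naive estimate $\sum_{y\in B_i}d(x^\ast_i,y)\le\mathrm{OPT}$ is far too lossy, giving an $O(nt/k)$ overhead. I would instead set up a potential $\Phi_i:=\sum_{y\in C_i}d(x^\ast_i,y)$, where $C_i$ is the set of candidates surviving to level $i$, and argue inductively that $\Phi_{i+1}\le \alpha\,\Phi_i$ for some $\alpha<1$ depending on $k$, using an averaging argument over the level-$i$ blocks that ``contract'' distances because each block's winner is closer, on average, to its block members than $x^\ast_i$ is. The per-level drift $d(x^\ast_i,x^\ast_{i+1})$ can then be charged against $\Phi_i/k$, and the geometric decay of $\Phi_i$ makes the total drift sum a bounded multiple of $\mathrm{OPT}/k$, which when multiplied by $n$ and combined with the $t$ levels gives the claimed $O(\epsilon\log n)$ factor.

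Finally, I would have to verify that the additive $+\sum_{y\in M}d(x^\ast,y)$ term in~(\ref{dontknowhowtoname1}) does not dominate, which follows automatically once the drift contribution is bounded by $O(\epsilon\log n)\cdot\mathrm{OPT}$. The delicate step is the inductive decay of $\Phi_i$; getting the right constant in $\alpha$ is where the choice $k=\exp(\Theta(1/\epsilon))$ enters, and I expect this to be the technical heart of the proof.
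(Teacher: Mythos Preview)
First, note that the paper does not prove this theorem at all: it is quoted from~\cite{Cha17ICASI} and used as a black box in the proof of Theorem~\ref{maintheorem}. So there is no ``paper's own proof'' to compare against; you are attempting to reconstruct an external result.

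Your tournament architecture and query/time accounting are fine, but the approximation analysis has a fatal arithmetic gap. You propose to control $\sum_{y}d(\tilde x,y)$ via inequality~(\ref{dontknowhowtoname1}), i.e., through $n\cdot d(x^*,\tilde x)+\mathrm{OPT}$, and then to bound the total drift $d(x^*,\tilde x)$ by $O(\mathrm{OPT}/k)$ using a geometrically decaying potential $\Phi_i$. Even granting that decay, the resulting bound is $n\cdot O(\mathrm{OPT}/k)+\mathrm{OPT}=O(n/k)\cdot\mathrm{OPT}$. Since $k=\lceil e^{c/\epsilon}\rceil$ is a constant independent of $n$, this is $\Theta(n)\cdot\mathrm{OPT}$, not $O(\epsilon\log n)\cdot\mathrm{OPT}$; the phrase ``combined with the $t$ levels'' does not salvage this, because the $t$ levels are already summed inside the drift bound. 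The route through~(\ref{dontknowhowtoname1}) is simply too lossy: multiplying by $n$ cannot be undone by dividing by a constant~$k$. Moreover, the claimed decay $\Phi_{i+1}\le\alpha\,\Phi_i$ itself is not established. If you try to prove it, the natural triangle-inequality estimate for $\Phi_{i+1}$ picks up a term of order $(|C_{i+1}|)\cdot d(x^*_i,x^*_{i+1})\le (n/k^{i+1})\cdot(2/k)\Phi_i$, and for small $i$ the coefficient $n/k^{i+2}$ is huge, so no contraction holds. A correct proof must bound $\sum_y d(\tilde x,y)$ more directly, showing that each of the $t=\Theta(\epsilon\log n)$ levels contributes only an additive $O(\mathrm{OPT})$ rather than a multiplicative factor; routing everything through $n\cdot d(x^*,\tilde x)$ cannot achieve this.
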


Below is our main theorem.

\begin{theorem}\label{maintheorem}
For each computable function $f\colon\mathbb{Z}^+\to\mathbb{Z}^+$
satisfying
$f(n)=\omega(1)$,
{\sc metric $1$-median} has a deterministic,
$o(n)$-query, $o(f(n)\cdot \log n)$-approximation and
nonadaptive algorithm.
\end{theorem}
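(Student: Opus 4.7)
The plan is to apply Chang's algorithm from Theorem~\ref{squarequeryalgorithm} to a carefully sized subset $S\subseteq M$ and then promote the resulting approximation guarantee on $(S,d\vert_{S\times S})$ to one on $(M,d)$ via Lemma~\ref{localvsglobalratio}. With $s=|S|$ and a fixed constant (say $\epsilon=1$) plugged into Theorem~\ref{squarequeryalgorithm}, Chang's algorithm makes $O(s)$ queries inside $S$ and returns an $O(\log s)$-approximate $1$-median of $(S,d\vert_{S\times S})$; Lemma~\ref{localvsglobalratio} then promotes this point to an $O((n/s)\log s)=O((n/s)\log n)$-approximate $1$-median of $(M,d)$. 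The design problem is thus to pick $s=o(n)$ (for sublinear query complexity) while keeping $n/s=o(f(n))$ (for the approximation ratio).

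The natural choice is $s=\lceil n/g(n)\rceil$ for a computable $g\colon\mathbb{Z}^+\to\mathbb{Z}^+$ satisfying $g(n)=\omega(1)$ and $g(n)=o(f(n))$; concretely, one may take $g(n)=\min\{\lfloor\sqrt{f(n)}\rfloor,\lfloor\log n\rfloor\}$, which is computable from $f$, tends to infinity (since $f(n)\to\infty$), is dominated by $\sqrt{f(n)}=o(f(n))$, and is at most $\log n=o(n)$ so that $s\to\infty$ as well. The set $S$ itself is taken to be the first $s$ elements of $M$ under any fixed ordering, which makes the selection oblivious to $d$.

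Putting the pieces together, the algorithm is: compute $g(n)$ and $s$ from $n$, fix $S\subseteq M$ with $|S|=s$, run Chang's algorithm on $(S,d\vert_{S\times S})$ with $\epsilon=1$, and return its output. Determinism and nonadaptivity are inherited from Theorem~\ref{squarequeryalgorithm} and from the fact that $S$ is chosen before any query is issued. The query complexity is $O(s)=o(n)$ because $g(n)\to\infty$, and the approximation ratio is $O((n/s)\log n)=O(g(n)\log n)=o(f(n)\log n)$ as required. A trivial brute-force branch (outputting any point, which is $(n-1)$-approximate) can cover the finitely many small $n$ where the bounds would be vacuous.

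The main technical obstacle is the joint constraint $1\ll n/s\ll f(n)$ combined with $s=o(n)$ (i.e.\ $n/s\to\infty$) for \emph{arbitrary} computable $f=\omega(1)$; no uniform growth rate on $f$ is available, so $g$ must be produced from $f$ in a way that automatically lies strictly below $f$ while still diverging. The $\sqrt{\cdot}$-truncation trick above is the cleanest resolution, and once $g$ is in hand Lemma~\ref{localvsglobalratio} and Theorem~\ref{squarequeryalgorithm} finish the argument mechanically.
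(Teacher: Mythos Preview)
Your proposal is correct and follows essentially the same approach as the paper: pick $S\subseteq M$ with $|S|=\Theta(n/\sqrt{f(n)})$, run the algorithm of Theorem~\ref{squarequeryalgorithm} on $(S,d\vert_{S\times S})$ to obtain an $O(\log|S|)$-approximate $1$-median of the subspace, and invoke Lemma~\ref{localvsglobalratio} to promote it to an $O((n/|S|)\log|S|)=O(\sqrt{f(n)}\log n)=o(f(n)\log n)$-approximate $1$-median of $(M,d)$. Your extra care with the $\log n$ cap in $g(n)$, the explicit computability justification, and the brute-force branch for small $n$ are harmless refinements that the paper omits, but the core argument is identical.
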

\begin{proof}
Take any
$S\subseteq M$ of size $\Theta(n/\sqrt{f(n)})$.
Applying Theorem~\ref{squarequeryalgorithm} to $(S,d\vert_{S\times S})$,
an $O(\log |S|)$-approximate $1$-median $x'_S$ of $(S,d\vert_{S\times S})$
can be found deterministically and nonadaptively with
$O(|S|)$ queries.
By Lemma~\ref{localvsglobalratio} (with $\beta=O(\log |S|)$),
$x'_S$ is an $O((\log |S|)\cdot n/|S|)$-approximate $1$-median of $(M,d)$.
\end{proof}

Taking a very slowly growing $f(\cdot)$ (e.g., the iterated
logarithm
or the inverse
Ackermann function),
Theorem~\ref{maintheorem} allows deterministic $o(n)$-query algorithms
to be very close to being $O(\log n)$-approximate.

\section{Lower bound}

Fix any deterministic $q$-query algorithm {\sf Alg}, where $q=q(n)=O(n)$.
Then take a constant $C>2d+4q/n$, where $d=O(1)$ is
such that $d$-regular expander graphs exist.
By padding, assume the number of {\sf Alg}'s queries to be exactly $q$.
Adversary {\sf Adv} in Fig.~\ref{adversary} answers the queries of {\sf Alg}.
All graphs are assumed to be undirected.

\begin{figure}
\begin{algorithmic}[1]
\STATE Let $G^{(0)}$ be the complete graph on $M$;
\STATE Pick a $d$-regular expander graph $G^{\text{exp}}$
on $M$,
where
$d=O(1)$;
\STATE Mark all edges of $G^{\text{exp}}$ as permanent;
\FOR{$i=1$ up to $q$}
  \STATE Receive the $i$th query, denoted by $(a_i,b_i)\in M^2$;
  \STATE Pick
  a shortest $a_i$-$b_i$ path $P_i$ in $G^{(i-1)}$;
  \STATE Answer the $i$th query by the length of $P_i$;
  \STATE Mark all edges of $P_i$ as permanent;
  \STATE $G^{(i)}\leftarrow G^{(i-1)}$;
  \FOR{each $v\in M$}
    \IF{$v$ is incident to more than $C$ permanent edges}
      \STATE Remove from $G^{(i)}$ all non-permanent edges incident to $v$;
    \ENDIF
  \ENDFOR
\ENDFOR
\end{algorithmic}
\caption{Adversary {\sf Adv} for answering the queries of {\sf Alg}}
\label{adversary}
\end{figure}

As a remark,
whenever
an edge of a graph
is marked as permanent,
that edge is considered to be permanent in all graphs.
For example, an edge of $G^{\text{exp}}$ marked as permanent
in line~3 of {\sf Adv}
is considered to be permanent in lines~11--13, even though the latter
processes $G^{(i)}$ rather than $G^{\text{exp}}$.
Similarly,
although
an edge marked as permanent by
line~8 comes from $G^{(i-1)}$ by line~6,
it is considered to be permanent
in lines~11--13 as well.
\comment{ 
The condition in line~11 is that more than $C$ edges
with $v$ as an endpoint
have been marked as permanent, no matter the marking was on $G^{\text{exp}}$
in line~3 or on $P_i$ (for some $i$) in line~8.
}

\begin{lemma}\label{expanderisembedded}
For all $0\le i\le q$,
$G^{\text{\rm exp}}$ is a subgraph of $G^{(i)}$.
\end{lemma}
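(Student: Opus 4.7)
The plan is to prove the lemma by straightforward induction on $i$, leveraging the remark immediately preceding the statement: once an edge is marked permanent (in line~3 or line~8), it retains that status for the entire execution of {\sf Adv}, regardless of which graph it is viewed in. The whole argument rests on the observation that the only place edges are ever deleted is line~12, which removes only \emph{non-permanent} edges.

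For the base case $i=0$, the graph $G^{(0)}$ is complete on $M$ by line~1, so it trivially contains $G^{\text{exp}}$ as a subgraph. For the inductive step, I would fix $i\ge 1$ and assume $G^{\text{exp}}\subseteq G^{(i-1)}$. By line~9, $G^{(i)}$ is initialized as a copy of $G^{(i-1)}$, and then modified only by the inner \textbf{for} loop on lines~10--14. Within that loop, the sole modification to $G^{(i)}$ is the deletion in line~12, which removes only non-permanent edges incident to certain overloaded vertices.

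Next I would invoke the fact that every edge of $G^{\text{exp}}$ was marked permanent in line~3, together with the remark that permanence is a global property preserved across all graphs manipulated by {\sf Adv}. Consequently, no edge of $G^{\text{exp}}$ is ever eligible for deletion in line~12, and all such edges survive the inner loop. Hence $G^{\text{exp}}\subseteq G^{(i)}$, completing the induction.

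The argument has essentially no technical obstacle; the only place where a careful reader might pause is the identification of permanence across different graphs ($G^{\text{exp}}$, $P_i$, $G^{(i-1)}$, $G^{(i)}$). I would handle this by explicitly citing the remark after Fig.~\ref{adversary} so that the reader sees that an expander edge, although introduced as an edge of $G^{\text{exp}}$, is treated as permanent when line~11 inspects degrees in $G^{(i)}$, and is therefore excluded from the deletions of line~12.
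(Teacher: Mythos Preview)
Your proof is correct and matches the paper's own argument essentially line for line: induction on $i$, with the base case following from the completeness of $G^{(0)}$ and the inductive step from the fact that all edges of $G^{\text{exp}}$ are permanent (line~3) while lines~9--14 delete only non-permanent edges. The paper phrases the step as ``all permanent edges of $G^{(i-1)}$ are in $G^{(i)}$,'' but the content is identical to what you wrote.
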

\begin{proof}
By line~1,
$G^{\text{\rm exp}}$ is a subgraph of $G^{(0)}$.
Assume as induction hypothesis that $G^{\text{\rm exp}}$
is a subgraph of $G^{(i-1)}$.
By line~3 and
the induction hypothesis,
all edges of $G^{\text{\rm exp}}$ are permanent edges of $G^{(i-1)}$.
By
lines~9--14, all permanent edges of $G^{(i-1)}$ are in $G^{(i)}$.
\end{proof}

\begin{lemma}[{Implicit in~\cite{Cha18}}]\label{consistencywiththefinaldistance}
For all $1\le i\le q$,
{\sf Adv}'s answer to the $i$th query of {\sf Alg} equals $d_{G^{(q)}}(a_i,b_i)$.
\end{lemma}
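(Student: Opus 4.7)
The plan is to show two inequalities: the adversary's answer, which equals the length of $P_i = d_{G^{(i-1)}}(a_i,b_i)$, is both an upper bound and a lower bound for $d_{G^{(q)}}(a_i,b_i)$. The main structural observation driving both directions is that the sequence $G^{(0)}, G^{(1)}, \dots, G^{(q)}$ is monotonically decreasing in the edge set: lines~9--14 only remove edges from $G^{(i-1)}$ to form $G^{(i)}$, and crucially, the removal in line~12 is restricted to non-permanent edges.

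For the upper bound $d_{G^{(q)}}(a_i,b_i) \le |P_i|$, I would argue that once line~8 marks every edge of $P_i$ as permanent, those edges can never be removed in any later iteration $j > i$, since line~12 only touches non-permanent edges. Hence $P_i$ survives intact in $G^{(i)}, G^{(i+1)}, \dots, G^{(q)}$, giving an $a_i$-$b_i$ walk of length $|P_i|$ in $G^{(q)}$.

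For the lower bound $d_{G^{(q)}}(a_i,b_i) \ge |P_i|$, I would use that $G^{(q)}$ is a subgraph of $G^{(i-1)}$, obtained by iteratively deleting edges through steps $i, i+1, \dots, q$. Thus any $a_i$-$b_i$ path witnessing $d_{G^{(q)}}(a_i,b_i)$ is also a path in $G^{(i-1)}$, so its length is at least $d_{G^{(i-1)}}(a_i,b_i) = |P_i|$, which is exactly the adversary's answer by lines~6--7.

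There is no real obstacle here; the only point requiring care is making explicit that the notion of ``permanent'' is global across all graphs in the sequence (as the paragraph after Fig.~\ref{adversary} stresses), so that edges marked permanent in line~3 or in earlier iterations of line~8 are exempt from removal in line~12 at every subsequent iteration. Once this is fixed, both directions are one-line arguments and can be combined to conclude the adversary's answer equals $d_{G^{(q)}}(a_i,b_i)$.
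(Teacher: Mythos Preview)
Your proposal is correct and matches the paper's own proof essentially line for line: both argue that $G^{(q)}\subseteq G^{(i-1)}$ gives one inequality, and that the permanence of $P_i$'s edges after line~8 preserves $P_i$ in all later $G^{(j)}$ to give the other. The only cosmetic difference is that the paper proves $\text{ans}_i\le d_{G^{(q)}}(a_i,b_i)$ first and then the reverse, whereas you swap the order.
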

\begin{proof}[Proof (included for completeness)]
Let ${\text{ans}}_i$ be {\sf Adv}'s answer to the $i$th query.
By lines~6--7, ${\text{ans}}_i=d_{G^{(i-1)}}(a_i,b_i)$.\footnote{As $G^{\text{exp}}$
is an expander, $d_{G^{(i-1)}}(a_i,b_i)<\infty$ by Lemma~\ref{expanderisembedded}.}
By lines~9--14, $G^{(q)}$ is a subgraph of $G^{(i-1)}$, implying $d_{G^{(i-1)}}(a_i,b_i)
\le d_{G^{(q)}}(a_i,b_i)$.
In summary, ${\text{ans}}_i\le d_{G^{(q)}}(a_i,b_i)$.

By line~7, ${\text{ans}}_i$ is the length of $P_i$.
As $P_i$ is in $G^{(i-1)}$ by line~6, all edges of $P_i$ are permanent edges
of $G^{(i)}$ by lines~8--14.
So by lines~9--14, $P_i$ exists in $G^{(j)}$ for all $j\ge i$.\footnote{Note that
once an edge is marked as permanent, it cannot be removed by line~12.}
Therefore,
the length of $P_i$ is at least
$d_{G^{(q)}}(a_i,b_i)$
(in fact, at least $d_{G^{(j)}}(a_i,b_i)$ for all $j\ge i$).
In summary,
${\text{ans}}_i\ge d_{G^{(q)}}(a_i,b_i)$.
\end{proof}

\begin{lemma}[{Implicit in~\cite{Cha18}}]\label{permanentedgesgrowslowlyonavertex}
For each $v\in M$,
each run of line~8
marks as permanent at most two edges incident to $v$.
\end{lemma}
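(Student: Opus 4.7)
The plan is to argue that the path $P_i$ selected in line~6 of {\sf Adv} is simple (has no repeated vertices), and that any simple path can contribute at most two edges incident to any fixed vertex $v$.

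First, I would observe that $P_i$ is a \emph{shortest} $a_i$-$b_i$ path in $G^{(i-1)}$, where the graph is unweighted (equivalently, all edge lengths equal $1$). If $P_i$ visited some vertex $w$ twice, one could delete the closed subwalk between the two visits to $w$ and obtain a strictly shorter $a_i$-$b_i$ walk, and hence a strictly shorter $a_i$-$b_i$ path, contradicting the choice of $P_i$. Thus $P_i$ is a simple path.

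Next, I would note that in any simple path, a vertex $v$ either (i) does not appear on the path, contributing $0$ incident edges, (ii) appears as an endpoint, contributing exactly one incident edge, or (iii) appears as an internal vertex, contributing exactly two incident edges (its predecessor edge and successor edge on the path). In all cases, at most two edges of $P_i$ are incident to $v$.

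Since line~8 marks as permanent precisely the edges of $P_i$, it marks at most two edges incident to $v$ as permanent in this single run. The main subtlety — really the only thing to check — is the simplicity of $P_i$, and this follows immediately from its shortest-path property. There is no real obstacle here; the lemma is essentially a structural observation about shortest paths.
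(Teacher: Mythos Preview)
Your proof is correct and takes the same approach as the paper: both rest on the observation that $P_i$, being a shortest path, is simple and therefore has at most two edges incident to any fixed vertex. The paper's own proof is a single sentence (``In line~6, $P_i$ has at most two edges incident to $v$''), so your version is simply a more explicit rendering of the same argument.
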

\begin{proof}[Proof (included for completeness)]
In line~6, $P_i$ has at most two edges incident to $v$.
\end{proof}

Let $E^{\text{perm}}$ be the set of edges
ever
marked as permanent,
and $G^{\text{perm}}=(M,E^{\text{perm}})$.
Denote by $z^*\in M$ the output of {\sf Alg} with all queries answered by {\sf Adv}.
By padding dummy queries,
assume without loss of generality that {\sf Alg} queries for the distance
between $z^*$ and each point in $M$.

\begin{lemma}[{Implicit in~\cite{Cha18}}]\label{heavilyqueriedpointsarebad}
$$\sum_{x\in M}\, d_{G^{(q)}}(z^*,x)=\Omega(n\log n).$$
\end{lemma}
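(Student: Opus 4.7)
My plan is to reduce the problem to a ball-volume argument in the subgraph $G^{\mathrm{perm}} = (M, E^{\mathrm{perm}})$ of permanent edges. The key observation I want to exploit is that the padding assumption combined with Lemma~\ref{consistencywiththefinaldistance} forces $d_{G^{\mathrm{perm}}}(z^*, x) = d_{G^{(q)}}(z^*, x)$ for every $x \in M$; this lets me discard the (potentially dense) non-permanent edges that survive between vertices whose permanent degree never exceeds $C$, and reduces the lemma to a lower bound on $\sum_x d_{G^{\mathrm{perm}}}(z^*, x)$ in a graph of constant maximum degree.

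For the distance equality I fix $x \in M$ and, using the padding, pick a step $i$ with $(a_i, b_i) = (z^*, x)$. By line~6 of {\sf Adv}, $P_i$ is a shortest $z^*$-to-$x$ path in $G^{(i-1)}$, and Lemma~\ref{consistencywiththefinaldistance} identifies its length with $d_{G^{(q)}}(z^*, x)$; line~8 then marks every edge of $P_i$ as permanent, exhibiting a $z^*$-to-$x$ walk of this length entirely inside $G^{\mathrm{perm}}$. The reverse inequality is immediate from $E^{\mathrm{perm}} \subseteq E(G^{(q)})$.

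Next I bound the maximum degree of $G^{\mathrm{perm}}$ by $C + 2 = O(1)$. A vertex $v$ whose permanent degree never exceeds $C$ is fine trivially. Otherwise, let $i^*$ be the first step at which $v$'s permanent degree exceeds $C$: just before step $i^*$ the degree is at most $C$, and Lemma~\ref{permanentedgesgrowslowlyonavertex} contributes at most $2$ more during step $i^*$, so immediately after line~8 the degree is at most $C+2$. Line~12 then deletes every non-permanent edge at $v$, so from that moment on every edge incident to $v$ is already permanent; hence no subsequent $P_j$ can introduce a new permanent edge at $v$, freezing its permanent degree at $\le C+2$.

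A standard volume bound then closes the argument: since $\Delta(G^{\mathrm{perm}}) \le C + 2$, the ball of radius $r$ around $z^*$ in $G^{\mathrm{perm}}$ contains at most $2(C+2)^r$ vertices, so for a small enough constant $c > 0$ (depending on $C$) the ball of radius $c \log n$ misses at least $n/2$ vertices. Those $n/2$ vertices lie at $G^{\mathrm{perm}}$-distance, and therefore by the distance equality also at $G^{(q)}$-distance, greater than $c \log n$ from $z^*$, yielding $\sum_{x \in M} d_{G^{(q)}}(z^*, x) \ge (n/2) \cdot c \log n = \Omega(n \log n)$. The delicate step is the distance equality: without it, the non-permanent edges surviving among low-permanent-degree vertices could in principle create short bypasses and collapse the sum of distances from $z^*$ all the way to $O(n)$.
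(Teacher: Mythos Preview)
Your proof is correct and follows essentially the same route as the paper: reduce to $G^{\mathrm{perm}}$ via the padding assumption and Lemma~\ref{consistencywiththefinaldistance}, bound the maximum degree of $G^{\mathrm{perm}}$ by $C+2$ using Lemma~\ref{permanentedgesgrowslowlyonavertex}, and finish with a ball-growth argument. The only cosmetic difference is that you establish the full equality $d_{G^{(q)}}(z^*,x)=d_{G^{\mathrm{perm}}}(z^*,x)$, whereas the paper records (and uses) only the inequality $d_{G^{(q)}}(z^*,x)\ge d_{G^{\mathrm{perm}}}(z^*,x)$.
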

\begin{proof}[Proof (included for completeness)]
By lines~7--8, {\sf Adv} answers each query of {\sf Alg}
by the length of a path
whose edges are all
in
$E^{\text{perm}}$.
So for all $i\ge1$,
the answer to the $i$th query is at least $d_{G^{\text{perm}}}(a_i,b_i)$.
Therefore,
$d_{G^{(q)}}(a_i,b_i)\ge d_{G^{\text{perm}}}(a_i,b_i)$
by Lemma~\ref{consistencywiththefinaldistance}, where $i\ge 1$.
This and the assumption that
{\sf Alg} queries for
all distances between $z^*$ and the points in $M$ give
\begin{eqnarray}
\sum_{x\in M}\, d_{G^{(q)}}(z^*,x)\ge\sum_{x\in M}\, d_{G^{\text{perm}}}(z^*,x).
\label{transforminganswerstothoseonpermanentgraph}
\end{eqnarray}

Consider the
instant
$t$ when the number of permanent
edges incident to a vertex $v\in M$ exceeds $C$.
By Lemma~\ref{permanentedgesgrowslowlyonavertex},
$v$ is incident to at most $C+2$ permanent edges at time $t$.
Then lines~9--14 remove
from $G^{(i)}$ all non-permanent edges incident to $v$ (and will not put them
back to $G^{(j)}$ for any $j>i$).
So no more edges incident to $v$ will be marked as permanent after time $t$.
In summary,
$v$ has degree at most $C+2$ in $G^{\text{perm}}$.
In the above argument, $v$ can be any vertex
whose number of incident permanent edges ever exceeds $C$.
So
$G^{\text{perm}}$ has maximum degree at most $C+2$.\footnote{Clearly,
a vertex
whose number of incident permanent edges never exceeds $C$
will have degree $\le C$ in $G^{\text{perm}}$.}
So for
all
$k\ge 1$,
at most $\sum_{h=0}^k\,(C+2)^h$ vertices in $G^{\text{perm}}$
can be
within distance $k$ (inclusive) from $z^*$.
Taking $k=\epsilon \log n$ for a small constant $\epsilon>0$ depending on $C$,
$\sum_{h=0}^k\,(C+2)^h\le\sqrt{n}$.
I.e.,
at least $n-\sqrt{n}$
vertices
are of distance greater than $\epsilon\log n$ from $z^*$ in $G^{\text{perm}}$.
So
$$\sum_{x\in M}\, d_{G^{\text{perm}}}(z^*,x)\ge \left(n-\sqrt{n}\right)
\cdot \epsilon\log n.$$
This and inequality~(\ref{transforminganswerstothoseonpermanentgraph})
complete the proof.
\end{proof}

Let
$\text{Bad}\subseteq M$ be the set of vertices
with degrees
at least
$C$
in $G^{\text{perm}}$.

\begin{lemma}[{Implicit in~\cite{Cha18}}]\label{edgesbetweengoodverticesarepreserved}
For all distinct $y$, $z\in M\setminus\text{\rm Bad}$,
$d_{G^{(q)}}(y,z)=1$.
\end{lemma}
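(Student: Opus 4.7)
The plan is to prove that the edge $\{y,z\}$ belongs to $G^{(q)}$, which immediately yields $d_{G^{(q)}}(y,z)=1$. Since $G^{(0)}$ is the complete graph on $M$, the edge is present at the start. Inspecting {\sf Adv} in Fig.~\ref{adversary}, the only step that ever deletes edges is line~12, which, for a fixed triggering vertex $v$ meeting the condition in line~11, removes only \emph{non-permanent} edges incident to $v$. Thus $\{y,z\}$ can disappear from some $G^{(i)}$ only if at some iteration a vertex $v\in\{y,z\}$ carries more than $C$ permanent incident edges.

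I would then invoke a monotonicity observation: an edge is marked permanent only in lines~3 and~8, and once marked it is never unmarked, while line~12 explicitly spares permanent edges. Consequently, for every vertex $v\in M$, the number of permanent edges incident to $v$ at any moment during the run is bounded by $\deg_{G^{\text{perm}}}(v)$. Since $y,z\notin\text{Bad}$, both have degree strictly less than $C$ in $G^{\text{perm}}$, so the running count of permanent edges at $y$ (respectively $z$) never exceeds $C$. Hence line~11 never fires on $v\in\{y,z\}$, and line~12 never removes any edge incident to $y$ or $z$.

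Combining these two observations, the edge $\{y,z\}$ is never removed, so it lies in $G^{(q)}$, giving $d_{G^{(q)}}(y,z)=1$. I do not foresee a serious obstacle: the only point requiring a moment of care is to read ``more than $C$'' in line~11 strictly, so that a vertex with at most $C-1$ permanent edges in $G^{\text{perm}}$ indeed never triggers the removal rule — which is precisely the definition of $M\setminus\text{Bad}$.
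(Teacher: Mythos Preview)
Your argument is correct and follows essentially the same route as the paper's proof: the edge $(y,z)$ is in $G^{(0)}$ by completeness, and since $y,z\notin\text{Bad}$ have fewer than $C$ permanent incident edges throughout the run, line~11 never fires on them and line~12 never removes $(y,z)$. Your write-up is simply a more expanded version of the paper's terse three-line proof.
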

\begin{proof}[Proof (included for completeness)]
By line~1, $(y,z)$ is an edge of $G^{(0)}$.
As $y$, $z\notin \text{\rm Bad}$,
$y$ and $z$
are incident to fewer
than $C$
edges
ever marked as permanent.
So lines~9--14
preserve
the edge $(y,z)$
in $G^{(i)}$
for all $i\ge 1$.
\end{proof}

By convention, $d(x,S)\equiv \inf_{s\in S}\,d(x,s)$
for all $x\in M$ and $S\subseteq M$.

\begin{corollary}\label{distancestogoodpoints}
For all $y\in M\setminus\text{\rm Bad}$,
$$\sum_{x\in M}\,d_{G^{(q)}}(x,y)
\le \sum_{x\in M}\,\left(d_{G^{(q)}}(x,M\setminus\text{\rm Bad})+1\right).$$
\end{corollary}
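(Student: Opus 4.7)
The plan is a direct reduction to Lemma~\ref{edgesbetweengoodverticesarepreserved} via the triangle inequality (in the graph metric $d_{G^{(q)}}$). First I would fix an arbitrary $y\in M\setminus\text{Bad}$ and an arbitrary $x\in M$, and let $z\in M\setminus\text{Bad}$ attain $d_{G^{(q)}}(x,M\setminus\text{Bad})$. I then split into two cases according to whether $z=y$.

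If $z=y$, then $d_{G^{(q)}}(x,y)=d_{G^{(q)}}(x,M\setminus\text{Bad})\le d_{G^{(q)}}(x,M\setminus\text{Bad})+1$, which is the desired per-$x$ bound. Otherwise $y,z$ are distinct vertices of $M\setminus\text{Bad}$, so Lemma~\ref{edgesbetweengoodverticesarepreserved} gives $d_{G^{(q)}}(z,y)=1$. Applying the triangle inequality in $G^{(q)}$,
$$d_{G^{(q)}}(x,y)\le d_{G^{(q)}}(x,z)+d_{G^{(q)}}(z,y)=d_{G^{(q)}}(x,M\setminus\text{Bad})+1.$$

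Finally I would sum the per-$x$ bound over all $x\in M$, obtaining the claimed inequality. There is no real obstacle here: the only non-trivial input is that edges between two non-Bad vertices survive in $G^{(q)}$, which is precisely Lemma~\ref{edgesbetweengoodverticesarepreserved}; everything else is the triangle inequality and summation.
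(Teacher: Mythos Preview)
Your proof is correct and essentially identical to the paper's: both pick a minimizer $z\in M\setminus\text{Bad}$ for $d_{G^{(q)}}(x,M\setminus\text{Bad})$, use Lemma~\ref{edgesbetweengoodverticesarepreserved} to bound $d_{G^{(q)}}(z,y)\le 1$, apply the triangle inequality, and sum over $x$. The only cosmetic difference is that the paper writes $d_{G^{(q)}}(y,z_x)\le 1$ directly (covering both $z_x=y$ and $z_x\neq y$), whereas you make the case split explicit.
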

\begin{proof}
Assume $M\setminus\text{Bad}\neq \emptyset$
to avoid vacuous truth.
For each $x\in M$, let $z_x\in M\setminus\text{\rm Bad}$
satisfy
$$
d_{G^{(q)}}(x,M\setminus\text{\rm Bad})
=d_{G^{(q)}}(x,z_x).
$$
By Lemma~\ref{edgesbetweengoodverticesarepreserved},
$d_{G^{(q)}}(y,z_x)\le 1
$
for all $x\in M$.
By the triangle inequality,
$$d_{G^{(q)}}(x,y)\le d_{G^{(q)}}(x,z_x)+d_{G^{(q)}}(y,z_x),$$
where $x\in M$.
\end{proof}

\begin{lemma}[{Implicit in~\cite{Cha18}}]\label{shortestpathhasonenewnonpermanent}
For all $1\le i\le q$ and when line~6 picks $P_i$,
$P_i$ has at most one non-permanent edge.
\end{lemma}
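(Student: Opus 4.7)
The plan is to argue by contradiction: I assume $P_i$ contains at least two non-permanent edges and extract a strictly shorter $a_i$-$b_i$ path in $G^{(i-1)}$, contradicting the choice of $P_i$ in line~6.

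First I would note that, without loss of generality, $P_i$ can be chosen simple (any shortest path can be trimmed of cycles). Write $P_i = v_0 v_1 \cdots v_k$ with $v_0 = a_i$, $v_k = b_i$, and suppose edges $(v_j,v_{j+1})$ and $(v_{j'},v_{j'+1})$ are both non-permanent, where $0 \le j < j' \le k-1$.

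The key observation is monotonicity: once an edge is marked as permanent (either by line~3 or by line~8), it remains permanent forever. Hence the number of permanent edges incident to any vertex is nondecreasing in time. Now, because $(v_j,v_{j+1})$ is non-permanent but still present in $G^{(i-1)}$, lines~11--13 have never triggered on $v_j$ (nor on $v_{j+1}$); by monotonicity, $v_j$ has at most $C$ permanent edges at every time up through iteration $i-1$. The same reasoning applied to $(v_{j'},v_{j'+1})$ yields that $v_{j'+1}$ has always been incident to at most $C$ permanent edges. Since $j < j'+1$ and $P_i$ is simple, $v_j \neq v_{j'+1}$, so $(v_j, v_{j'+1})$ is a genuine edge of $G^{(0)}$ (the complete graph on $M$). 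This edge can only leave the evolving graph via line~12 applied to $v_j$ or to $v_{j'+1}$; neither has ever occurred. Therefore $(v_j, v_{j'+1})$ remains an edge of $G^{(i-1)}$.

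Replacing the subpath $v_j v_{j+1} \cdots v_{j'+1}$ of $P_i$ by the single edge $(v_j, v_{j'+1})$ produces an $a_i$-$b_i$ walk in $G^{(i-1)}$ of length $k - (j'-j) \le k - 1 < k$. This contradicts $P_i$ being a shortest $a_i$-$b_i$ path in $G^{(i-1)}$, so $P_i$ has at most one non-permanent edge.

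The main obstacle is being precise about the time at which the relevant degree condition holds. One could worry that a vertex briefly satisfies ``$\le C$ permanent edges'' at time $i-1$ yet had exceeded $C$ earlier, which would have already caused its incident non-permanent edges to be deleted. The monotonicity remark dispels this worry and is the whole crux; everything else is a routine shortcut argument based on $G^{(0)}$ being complete.
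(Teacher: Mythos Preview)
Your proof is correct and follows essentially the same approach as the paper: both argue by contradiction that the shortcut edge $(v_j,v_{j'+1})$ survives in $G^{(i-1)}$ because line~12 can never have been applied to either endpoint (else one of the two non-permanent edges of $P_i$ would already be gone), contradicting the optimality of $P_i$. Your write-up is in fact slightly more careful than the paper's in making explicit the simplicity of $P_i$ (needed for $v_j\neq v_{j'+1}$) and the irreversibility of permanence.
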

\begin{proof}[Proof (included for completeness)]
Write $P_i=(v_1,v_2,\ldots,v_t)$.
Assume for contradiction that $(v_h,v_{h+1})$ and $(v_k,v_{k+1})$
are both non-permanent when line~6 picks $P_i$
from $G^{(i-1)}$,
for some $1\le h<k<t$.
By line~1, $G^{(0)}$ has the edge $(v_h,v_{k+1})$.
But by the optimality of $P_i$ in line~6,
$G^{(i-1)}$ cannot have the edge $(v_h,v_{k+1})$.
So there exists $1\le \ell\le i-1$ such that
line~12
runs with $v\in \{v_h,v_{k+1}\}$
in
the $\ell$th iteration of the loop in lines~4--15.\footnote{Let
$\ell$ be the smallest index such that $G^{(\ell)}$ does not have
$(v_h,v_{k+1})$.
Line~9 initializes $G^{(\ell)}$ to be $G^{(\ell-1)}$, which has
$(v_h,v_{k+1})$.
So
line~12 must remove
$(v_h,v_{k+1})$
from $G^{(\ell)}$.
This happens only
by running line~12 with $v\in \{v_h,v_{k+1}\}$.}
Being non-permanent when line~6 picks $P_i$
from $G^{(i-1)}$,
$(v_h,v_{h+1})$ and $(v_k,v_{k+1})$
must
have
remained
non-permanent
throughout the first $i-1$
iterations (including the $\ell$th iteration)
of the loop in lines~4--15 (because of the irreversibility of permanence).
Therefore,
when
line~12
runs
with $v\in \{v_h,v_{k+1}\}$
in
the $\ell$th iteration of the loop in lines~4--15,
$(v_h,v_{h+1})$ or $(v_k,v_{k+1})$ must be removed from $G^{(\ell)}$.
By symmetry, assume $G^{(\ell)}$ to not have $(v_h,v_{h+1})$.
By lines~9--14 and as $\ell\le i - 1$,
$G^{(i-1)}$ cannot have $(v_h,v_{h+1})$, either.
As
$P_i$ is picked from $G^{(i-1)}$ by line~6,
$G^{(i-1)}$
must
have $(v_h,v_{h+1})$ (which is on $P_i$), a contradiction.
\comment{ 
As line~6 picks $P_i$ from $G^{(i-1)}$,
$(v_h,v_{h+1})$ and $(v_k,v_{k+1})$ are
edges of $G^{(i-1)}$.
For each $1\le\ell\le i-1$, if
the $\ell$th iteration of the loop in lines~4--15
ever runs
line~12
with $v=v_h$ (resp., $v=v_{k+1}$),
then
$G^{(\ell)}$, $G^{(\ell+1)}$, $\ldots$
cannot have
non-permanent edges incident to $v_h$ (resp., $v_{k+1}$) by lines~9--14.
So the non-permanence of $(v_h,v_{h+1})$ (resp., $(v_k,v_{k+1})$)
at the time of picking $P_i$
and the fact that $(v_h,v_{h+1})$ (resp., $(v_k,v_{k+1})$) is an edge of
$G^{(i-1)}$
imply that
line~12
does not run with $v=v_h$ (resp., $v_{k+1}$)
in the first $i-1$ iterations of the loop in lines~4--15.
So by lines~9--14, the edge $(v_h,v_{k+1})$ must be preserved in $G^{(i-1)}$
as long as it is in $G^{(0)}$.\footnote{Without running line~12 with $v\in\{v_h,
v_{k+1}\}$, the edge $(v_h,v_{k+1})$ cannot be removed.}
By line~1, $G^{(0)}$ has the edge $(v_h,v_{k+1})$.
In summary, $G^{(i-1)}$ has the edge $(v_h,v_{k+1})$, allowing
$P_i$ to be shortened to
$$\left(v_1,v_2,\ldots,v_{h-1},v_h,v_{k+1},v_{k+2},\ldots,v_t\right)$$
in line~6.
This contradicts the optimality in line~6.
}
\end{proof}

\begin{corollary}[{Implicit in~\cite{Cha18}}]\label{numberofedgesmarkedaspermanent}
Each
run
of line~8 increases
the number of permanent edges
by at most one.
\end{corollary}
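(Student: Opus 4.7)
The plan is to derive this corollary as an immediate consequence of Lemma~\ref{shortestpathhasonenewnonpermanent}. Line~8 marks every edge of $P_i$ as permanent, so the only edges that can possibly become newly permanent in this step are those edges of $P_i$ that were non-permanent when $P_i$ was picked in line~6. By Lemma~\ref{shortestpathhasonenewnonpermanent}, there is at most one such edge, hence at most one previously non-permanent edge is turned into a permanent one by line~8. All other edges of $P_i$ were already permanent before line~8, so re-marking them does not change the count of permanent edges.

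The only thing to be slightly careful about is the semantics of ``permanent'': an edge, once marked as permanent, stays permanent (the remark after Fig.~\ref{adversary} and the footnote in the proof of Lemma~\ref{consistencywiththefinaldistance} confirm this irreversibility). Thus ``the number of permanent edges'' is a monotonically non-decreasing quantity across the execution, and the change caused by line~8 is exactly the count of edges of $P_i$ that were non-permanent immediately before line~8 ran. Applying Lemma~\ref{shortestpathhasonenewnonpermanent} at that very moment (i.e., at the time line~6 picked $P_i$, which is the same state as just before line~8 since lines~6--7 do not modify permanence) bounds this count by one.

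I do not anticipate any real obstacle; the corollary is essentially a rephrasing of Lemma~\ref{shortestpathhasonenewnonpermanent} in terms of the increment in $|E^{\text{perm}}|$. The proof will be two or three sentences: invoke Lemma~\ref{shortestpathhasonenewnonpermanent} to see that $P_i$ contains at most one non-permanent edge when it is chosen, observe that marking already-permanent edges does not increase $|E^{\text{perm}}|$, and conclude that line~8 adds at most one edge to $E^{\text{perm}}$.
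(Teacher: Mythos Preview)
Your proposal is correct and matches the paper's own proof, which simply states ``Immediate from Lemma~\ref{shortestpathhasonenewnonpermanent}.'' Your additional remarks about the irreversibility of permanence and the fact that lines~6--7 do not alter permanence are accurate and make the one-line deduction explicit, but they add nothing beyond what the paper already intends.
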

\begin{proof}[Proof (included for completeness)]
Immediate from Lemma~\ref{shortestpathhasonenewnonpermanent}.
\end{proof}

\begin{lemma}\label{fewbadpoints}
$|\text{\rm Bad}|\le n/2$.
\end{lemma}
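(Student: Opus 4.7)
The plan is to bound $|E^{\text{perm}}|$ and then use a double-counting argument on degrees. Two mechanisms ever mark edges as permanent: line~3 marks the edges of $G^{\text{exp}}$, and line~8 marks the edges of shortest paths $P_i$. The first contributes $nd/2$ edges, since $G^{\text{exp}}$ is $d$-regular on $n$ vertices. For the second, I would invoke Corollary~\ref{numberofedgesmarkedaspermanent}, which says each execution of line~8 adds at most one new permanent edge; over the $q$ iterations of the main loop, this contributes at most $q$ edges. Thus
$$
|E^{\text{perm}}| \le \frac{nd}{2} + q.
$$

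Next I would translate this into a bound on the sum of degrees in $G^{\text{perm}}$: the handshake lemma gives
$$
\sum_{v\in M} \deg_{G^{\text{perm}}}(v) = 2\,|E^{\text{perm}}| \le nd + 2q.
$$
On the other hand, by the definition of $\text{Bad}$, each vertex in $\text{Bad}$ contributes at least $C$ to the left-hand side, so $\sum_{v\in M}\deg_{G^{\text{perm}}}(v) \ge C\cdot|\text{Bad}|$.

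Combining these and using the choice $C > 2d + 4q/n$ made at the start of the section, I would conclude
$$
|\text{Bad}| \le \frac{nd + 2q}{C} < \frac{nd + 2q}{2d + 4q/n} = \frac{n}{2}.
$$
The main step is really just the counting; the only thing that needs care is making sure every contribution to $|E^{\text{perm}}|$ is accounted for, which is exactly what Corollary~\ref{numberofedgesmarkedaspermanent} (via Lemma~\ref{shortestpathhasonenewnonpermanent}) guarantees for line~8, and which is immediate for the single execution of line~3. No obstacle beyond correctly invoking these prior results is anticipated.
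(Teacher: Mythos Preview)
Your proof is correct and essentially identical to the paper's: both bound $|E^{\text{perm}}|\le dn/2+q$ via the regularity of $G^{\text{exp}}$ and Corollary~\ref{numberofedgesmarkedaspermanent}, then compare $C$ against the resulting degree sum. The paper phrases the final step as ``average degree $\le d+2q/n$ plus Markov's inequality'' rather than your direct double-counting, but these are the same argument.
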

\begin{proof}
As $G^{\text{exp}}$ is $d$-regular by line~2, line~3 marks
$dn/2$ edges as permanent by the handshaking lemma.
By Corollary~\ref{numberofedgesmarkedaspermanent},
at most
$q$ edges
are ever marked
as permanent by line~8.
To sum up,
$G^{\text{perm}}$
has
at most $dn/2+q$ edges.
So by the handshaking lemma, the average degree in $G^{\text{perm}}$
is at most $d+2q/n$.
This and Markov's inequality imply that
at most $n/2$ vertices have degrees
at least $2d+4q/n$
in $G^{\text{perm}}$.
As $C>2d+4q/n$, at most $n/2$ vertices have degrees
at least $C$
in $G^{\text{perm}}$.
\end{proof}

\begin{lemma}\label{goodpointsarereallygood}
For all $y\in M\setminus\text{\rm Bad}$,
$\sum_{x\in M}\,d_{G^{(q)}}(x,y)=O(n)$.
\end{lemma}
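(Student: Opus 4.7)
The plan is to combine Corollary~\ref{distancestogoodpoints}, Lemma~\ref{fewbadpoints}, and the expansion of $G^{\text{exp}}$ (which embeds into $G^{(q)}$ by Lemma~\ref{expanderisembedded}) to show that most vertices are close to the large set $M\setminus\text{\rm Bad}$ in $G^{(q)}$, so their distances sum to $O(n)$.

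First I would apply Corollary~\ref{distancestogoodpoints} to reduce the claim to showing $\sum_{x\in M}\, d_{G^{(q)}}(x,M\setminus\text{\rm Bad})=O(n)$; the extra $+1$ contributes only $n$. Then, since $G^{\text{exp}}$ is a subgraph of $G^{(q)}$ by Lemma~\ref{expanderisembedded}, one has $d_{G^{(q)}}(x,M\setminus\text{\rm Bad})\le d_{G^{\text{exp}}}(x,M\setminus\text{\rm Bad})$, so it suffices to bound the sum in the expander. By Lemma~\ref{fewbadpoints}, $S:=M\setminus\text{\rm Bad}$ has $|S|\ge n/2$.

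Now I would invoke the (vertex) expansion of $G^{\text{exp}}$: there is a constant $\alpha>0$ such that every $A\subseteq M$ with $|A|\le n/2$ satisfies $|N[A]|\ge (1+\alpha)|A|$, where $N[A]$ denotes the closed neighborhood. Define
$$
T_k \;=\; \{\, x\in M : d_{G^{\text{exp}}}(x,S)>k \,\}, \qquad k\ge 0.
$$
Then $T_0=M\setminus S$ has $|T_0|\le n/2$, and every neighbor in $G^{\text{exp}}$ of a vertex in $T_k$ lies in $T_{k-1}$, hence $N[T_k]\subseteq T_{k-1}$. Applying the expansion bound (valid since $|T_k|\le |T_0|\le n/2$) gives
$$
|T_{k-1}| \;\ge\; (1+\alpha)\,|T_k|,
$$
and by induction $|T_k|\le (n/2)(1+\alpha)^{-k}$.

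Finally, I would rewrite the desired sum by the layer-cake identity
$$
\sum_{x\in M}\, d_{G^{\text{exp}}}(x,S)
\;=\; \sum_{k\ge 0}\, |T_k|
\;\le\; \frac{n}{2}\sum_{k\ge 0}\,(1+\alpha)^{-k}
\;=\; O(n).
$$
Combined with the reduction via Corollary~\ref{distancestogoodpoints}, this yields $\sum_{x\in M}\, d_{G^{(q)}}(x,y)=O(n)$ for every $y\in M\setminus\text{\rm Bad}$. The only delicate point I foresee is matching the formal expansion guarantee of the particular family of $d$-regular expanders used in line~2 with the vertex-expansion bound invoked above; if only edge expansion is assumed, I would deduce vertex expansion from $d$-regularity (neighborhoods differ from edge boundaries by at most a factor of $d$), which preserves the geometric decay and hence the $O(n)$ conclusion.
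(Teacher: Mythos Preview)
Your proposal is correct and follows essentially the same route as the paper: reduce via Corollary~\ref{distancestogoodpoints} and Lemma~\ref{expanderisembedded} to bounding $\sum_{x}d_{G^{\text{exp}}}(x,M\setminus\text{Bad})$, then use Lemma~\ref{fewbadpoints} together with the expansion of $G^{\text{exp}}$ to obtain geometric decay of the level sets and conclude $O(n)$. The paper packages the expander step as a separate appendix lemma (Lemma~\ref{averagedistanceleavingaset}) phrased via edge expansion and BFS layers, whereas you inline the same argument with vertex expansion and the layer-cake identity; your closing remark that edge expansion plus $d$-regularity yields the needed vertex expansion is exactly how the paper proceeds.
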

\begin{proof}
By Lemmas~\ref{fewbadpoints}~and~\ref{averagedistanceleavingaset} (in
Appendix~\ref{expandersaveragedistanceleavingaset}),
\begin{eqnarray}
\sum_{x\in \text{Bad}}\, d_{G^{\text{exp}}}\left(x,M\setminus\text{Bad}\right)=O(n).
\nonumber
\end{eqnarray}
This and
Lemma~\ref{expanderisembedded}
give
\begin{eqnarray}
\sum_{x\in \text{Bad}}\, d_{G^{(q)}}\left(x,M\setminus\text{Bad}\right)
=O(n).
\label{sumofdistancesfrombadout}
\end{eqnarray}
Clearly,
\begin{eqnarray}
\sum_{x\in M\setminus\text{Bad}}\, d_{G^{(q)}}\left(x,M\setminus\text{Bad}\right)
\le\sum_{x\in M\setminus\text{Bad}}\, d_{G^{(q)}}\left(x,x\right)
=0.\label{sumofditancesfromgoodinside}
\end{eqnarray}
Now sum
up equations~(\ref{sumofdistancesfrombadout})--(\ref{sumofditancesfromgoodinside})
and
invoke Corollary~\ref{distancestogoodpoints}.
\end{proof}

\begin{theorem}
Each
deterministic $O(n)$-query algorithm
for
{\sc metric $1$-median}
is not $(\delta\log n)$-approximate for a sufficiently small constant $\delta>0$.
\end{theorem}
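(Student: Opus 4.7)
The plan is to combine the lemmas already established against the adversary {\sf Adv} to exhibit, for any prescribed deterministic $O(n)$-query algorithm {\sf Alg}, a single instance $(M, d_{G^{(q)}})$ on which the output $z^*$ of {\sf Alg} has cost asymptotically $\Omega(\log n)$ times larger than the cost of some other point. First I would verify that {\sf Adv} yields a legitimate metric instance: by Lemma~\ref{expanderisembedded} the connected expander $G^{\text{exp}}$ is a subgraph of $G^{(q)}$, so $d_{G^{(q)}}$ is finite and satisfies the metric axioms on $M$; by Lemma~\ref{consistencywiththefinaldistance} every answer {\sf Adv} gave to {\sf Alg} coincides with $d_{G^{(q)}}(a_i, b_i)$, so {\sf Alg}'s entire transcript is consistent with this one fixed metric instance and its output on $(M, d_{G^{(q)}})$ is indeed $z^*$.

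Next I would upper-bound the objective value of a witness point that beats $z^*$. By Lemma~\ref{fewbadpoints} we have $|\text{Bad}| \le n/2$, so $M \setminus \text{Bad}$ is non-empty; pick any $y \in M \setminus \text{Bad}$. Lemma~\ref{goodpointsarereallygood} then gives
\[
\sum_{x \in M}\, d_{G^{(q)}}(x, y) = O(n).
\]
Meanwhile Lemma~\ref{heavilyqueriedpointsarebad} yields the matching lower bound on the output:
\[
\sum_{x \in M}\, d_{G^{(q)}}(z^*, x) = \Omega(n \log n).
\]
Dividing these two bounds shows that the cost of $z^*$ exceeds the cost of the (in particular, optimal) $1$-median of $(M, d_{G^{(q)}})$ by a factor of $\Omega(\log n)$. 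Consequently, for any sufficiently small constant $\delta > 0$ depending on the hidden constants in the two asymptotic bounds above (which in turn depend on $C$, $d$, and the query-count constant of {\sf Alg}), the output $z^*$ cannot be $(\delta \log n)$-approximate, completing the proof.

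The main conceptual obstacle — ensuring that such a point $y$ with small objective exists at all — has already been dealt with in the construction by seeding $G^{(0)}$ with a permanent $O(1)$-regular expander in line~3 of {\sf Adv}; this is precisely what makes Lemma~\ref{goodpointsarereallygood} available and is the step that overcomes the earlier obstacle noted in Chang's~\cite{Cha18} framework. Given that groundwork, the present theorem is essentially a bookkeeping assembly of the previous lemmas, and no further nontrivial estimate is required.
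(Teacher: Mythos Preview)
Your proposal is correct and mirrors the paper's own proof essentially step for step: invoke Lemma~\ref{consistencywiththefinaldistance} for consistency of the adversary with the metric $d_{G^{(q)}}$, use Lemma~\ref{fewbadpoints} to guarantee a point $y\in M\setminus\text{Bad}$, combine Lemma~\ref{goodpointsarereallygood} (giving $\sum_x d_{G^{(q)}}(x,y)=O(n)$) with Lemma~\ref{heavilyqueriedpointsarebad} (giving $\sum_x d_{G^{(q)}}(z^*,x)=\Omega(n\log n)$), and read off the $\Omega(\log n)$ ratio. The only addition beyond the paper's terse write-up is your explicit remark, via Lemma~\ref{expanderisembedded}, that $G^{(q)}$ is connected so that $d_{G^{(q)}}$ is a genuine finite metric; this is a welcome clarification but does not change the argument.
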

\begin{proof}
By Lemma~\ref{consistencywiththefinaldistance}, {\sf Adv} answers
consistently with
$d_{G^{(q)}}(\cdot,\cdot)$.
By Lemmas~\ref{heavilyqueriedpointsarebad}~and~\ref{fewbadpoints}--\ref{goodpointsarereallygood},
{\sf Alg}'s output, $z^*$, satisfies
$$\sum_{x\in M}\,d_{G^{(q)}}(z^*,x)=\Omega(\log n)\cdot
\sum_{x\in M}\,d_{G^{(q)}}(y,x)$$
for some $y\in M$.
Finally, recall that {\sf Alg} is an arbitrary deterministic
$O(n)$-query algorithm.
\end{proof}

\subsection{Even fewer queries}

For all $n\in\mathbb{Z}^+$, $[n]\equiv\{1,2,\ldots,n\}$.
This subsection assumes $q=o(n)$ and $M=[n]$.
An algorithm is said to be tame if
its queries are in $[2q+1]\times [2q+1]$
and its output in $[2q+1]$.

\begin{figure}
\begin{algorithmic}[1]
\STATE $\text{cnt}\leftarrow 0$;
\FOR{$i=1$ up to $q$}
  \STATE Receive the $i$th query of {\sf Alg}, denoted by $(a_i,b_i)\in M^2$;
  \IF{$a_i\notin \{a_1,b_1,a_2,b_2,\ldots,a_{i-1},b_{i-1}\}$}
    \STATE $\text{cnt}\leftarrow \text{cnt}+1$;
    \STATE $\pi(a_i)\leftarrow \text{cnt}$;
  \ENDIF
  \IF{$b_i\notin \{a_1,b_1,a_2,b_2,\ldots,a_{i-1},b_{i-1}\}\cup\{a_i\}$}
    \STATE $\text{cnt}\leftarrow \text{cnt}+1$;
    \STATE $\pi(b_i)\leftarrow \text{cnt}$;
  \ENDIF
  \STATE Query for the distance between $\pi(a_i)$ and $\pi(b_i)$, and return
  the answer to {\sf Alg};
\ENDFOR
\STATE Receive the output $z^*$ of {\sf Alg};
\IF{$z^*\notin \{a_1,b_1,a_2,b_2,\ldots,a_{q},b_{q}\}$}
  \STATE $\text{cnt}\leftarrow \text{cnt}+1$;
  \STATE $\pi(z^*)\leftarrow \text{cnt}$;
\ENDIF
\RETURN $\pi(z^*)$;
\end{algorithmic}
\caption{Algorithm {\sf Sim} for simulating {\sf Alg} with points renamed}
\label{simulatorfororiginalalgorithmic}
\end{figure}

\begin{lemma}\label{injectiverenaming}
When {\sf Sim} (in Fig.~\ref{simulatorfororiginalalgorithmic}) terminates,
$\pi(\cdot)$ is injective.
\end{lemma}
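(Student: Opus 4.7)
The plan is a straightforward induction on the step counter of {\sf Sim}, showing two invariants simultaneously: (a) whenever $\pi(p)$ is assigned, the value $\text{cnt}$ being used is strictly larger than every value $\text{cnt}$ previously assigned by $\pi$; and (b) the ``previously seen'' set $\{a_1,b_1,\ldots,a_{i-1},b_{i-1}\}$ guarding line~4, line~8, and (for the output block) line~15 equals the domain of $\pi$ at that moment. Together, (a) and (b) force $\pi$ to be injective when {\sf Sim} halts.

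First I would fix notation: let $D_j$ denote the domain of $\pi$ immediately after the $j$th event that assigns $\pi$, and let $c_j$ be the value of $\text{cnt}$ assigned at that event. Because $\text{cnt}$ is incremented by $1$ immediately before each assignment (lines~5--6, 9--10, 16--17), the sequence $c_1<c_2<\cdots$ is strictly increasing. Next I would verify invariant (b) by induction on $i$: the set $\{a_1,b_1,\ldots,a_{i-1},b_{i-1}\}$ appearing in the guards of the $i$th iteration coincides with the domain of $\pi$ at the start of that iteration, because every point added to the left-hand side in past iterations has been assigned a $\pi$-value (possibly in the same iteration, via both the $a_i$-branch and the $b_i$-branch, the latter guarded also by $\{a_i\}$). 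Combining with the fact that the guards are the \emph{only} gates to the assignment lines, no point ever has $\pi$ reassigned.

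Once no point is assigned twice, invariant (a) upgrades this to injectivity: two distinct points $p\neq q$ receive $\pi$-values at distinct assignment events, hence get distinct $\text{cnt}$ values, so $\pi(p)\neq\pi(q)$. Applying this to the final state, after the output block in lines~15--18 has (possibly) extended the domain by one more point $z^*$, yields injectivity at termination.

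I do not anticipate a real obstacle here: the lemma is purely a consistency statement about the pseudocode, and the only subtle point is the additional ``$\cup\{a_i\}$'' in the line~8 guard, which is what keeps the $b_i$-branch from duplicating an assignment already made one line earlier when $b_i=a_i$ inside the same iteration. Handling this case explicitly in the induction step for (b) is the one spot that deserves a sentence rather than being left implicit.
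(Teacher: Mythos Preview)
Your proposal is correct and rests on the same observation as the paper's one-line proof: because \text{cnt} is incremented immediately before each of lines~6,~10,~and~17, every assignment to $\pi$ uses a fresh, strictly larger value, which already forces injectivity. Your invariant~(b) (no point is ever reassigned) is true but superfluous---even if some point were reassigned, the final $\pi$-values would still come from distinct assignment events and hence be distinct---so the paper simply omits that part.
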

\begin{proof}
Before lines~6,~10~and~17, cnt increments.
\end{proof}

\begin{lemma}\label{smallrangeofpointsconcerned}
When {\sf Sim} terminates,
$\pi(a_i)$, $\pi(b_i)$, $\pi(z^*)\in[2q+1]$ for all $1\le i\le q$.
\end{lemma}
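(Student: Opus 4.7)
The plan is to bound the counter variable \texttt{cnt} throughout the execution of \textsf{Sim}, since every value $\pi(\cdot)$ assigned is precisely the current value of \texttt{cnt}. If I can argue \texttt{cnt} never exceeds $2q+1$ upon termination, then all assigned $\pi$-values lie in $[2q+1]$, and Lemma~\ref{injectiverenaming} guarantees these values are distinct (though distinctness is not needed for this lemma).

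First I would observe that \texttt{cnt} is incremented in exactly three places: lines 5, 9, and 16. Lines 5 and 9 each execute at most once per iteration of the main loop (lines 2--13), and line 16 executes at most once after the loop ends. Hence after iteration $i$ of the main loop completes, the value of \texttt{cnt} is at most $2i$. In particular, after all $q$ iterations finish, $\texttt{cnt}\le 2q$, and after line 17 possibly runs we have $\texttt{cnt}\le 2q+1$.

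Next, for each $1\le i\le q$, I would verify that $\pi(a_i)$ and $\pi(b_i)$ are defined by the end of iteration $i$ and take values in $[2q]$. For $a_i$: either $a_i$ already appears in $\{a_1,b_1,\ldots,a_{i-1},b_{i-1}\}$, in which case $\pi(a_i)$ was assigned in some earlier iteration $i'<i$ with value at most $2i'\le 2(i-1)$, or $a_i$ is fresh, so line 6 assigns $\pi(a_i)$ the current value of \texttt{cnt}, which is at most $2i$. Either way $\pi(a_i)\le 2q$. The reasoning for $b_i$ is identical, using lines 8--10 and the bound $\texttt{cnt}\le 2i$.

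Finally, for $\pi(z^*)$: if $z^*\in\{a_1,b_1,\ldots,a_q,b_q\}$, then $\pi(z^*)$ was assigned inside the loop and is bounded by $2q$ by the argument above; otherwise line 17 assigns $\pi(z^*)$ the current value of \texttt{cnt}, which is at most $2q+1$. In all cases $\pi(z^*)\in[2q+1]$. I do not anticipate any genuine obstacle here; the entire proof is a short counting argument on the number of times \texttt{cnt} can increment, so I would present it in a few lines.
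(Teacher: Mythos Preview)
Your proposal is correct and follows exactly the paper's approach: bound \texttt{cnt} by observing that each of the $q$ loop iterations increments it at most twice and lines~15--18 increment it at most once more, so every assigned value $\pi(\cdot)=\texttt{cnt}$ lies in $[2q+1]$. The paper's proof is a terse three-line version of your argument, omitting the case analysis on whether $a_i$, $b_i$, $z^*$ are fresh (which is fine since the bound $\texttt{cnt}\le 2q+1$ suffices regardless).
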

\begin{proof}
Each query increases cnt by at most two in lines~4--11.
Lines~15--18 may also increase cnt.
Lines~6,~10,~and~17 set $\pi(x)$ to be cnt for some $x\in M$.
\end{proof}

\begin{lemma}\label{simulationdoesnotchangeproperties}
If {\sf Alg} is $h(n)$-approximate for {\sc metric $1$-median}, where
$h\colon\mathbb{Z}^+\to\mathbb{R}$, then
{\sf Sim} is a tame $q$-query
$h(n)$-approximation algorithm for {\sc metric $1$-median}.
\end{lemma}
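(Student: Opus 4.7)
The plan is to argue that {\sf Sim} is essentially {\sf Alg} run on a relabeled copy of $(M,d)$, so any approximation guarantee of {\sf Alg} transfers to {\sf Sim}. First I would extend the partial injective map $\pi$ (which, by Lemma \ref{injectiverenaming}, is injective on the points it touches, and by Lemma \ref{smallrangeofpointsconcerned}, has image inside $[2q+1]$) to a full bijection $\tilde\pi \colon M \to M$; this is possible because $M = [n]$ is finite and $\pi$ is a partial injection. Then I would define the metric $d'$ on $M$ by
\[
d'(x,y) \;=\; d\bigl(\tilde\pi(x),\tilde\pi(y)\bigr),
\]
which is a metric because $d$ is and $\tilde\pi$ is a bijection.

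The key observation is that the transcript of queries and answers seen by {\sf Alg} inside {\sf Sim} is exactly the transcript that {\sf Alg} would see when executed on $(M,d')$: for the $i$th query $(a_i,b_i)$, line~12 of {\sf Sim} returns $d(\pi(a_i),\pi(b_i)) = d'(a_i,b_i)$, since $\pi$ and $\tilde\pi$ agree on $a_i,b_i$. Because {\sf Alg} is deterministic, it therefore outputs the same $z^*$ it would output on input $(M,d')$, and this $z^*$ is an $h(n)$-approximate $1$-median of $(M,d')$.

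From there it is a one-line translation: using $d'(x,y) = d(\tilde\pi(x),\tilde\pi(y))$ and the fact that $\tilde\pi$ is a bijection, we get
\[
\sum_{w\in M} d\bigl(\tilde\pi(z^*),w\bigr) \;\le\; h(n)\cdot \min_{v\in M}\sum_{w\in M} d(v,w),
\]
so $\pi(z^*) = \tilde\pi(z^*)$ is an $h(n)$-approximate $1$-median of $(M,d)$, which is exactly what {\sf Sim} returns. Finally, {\sf Sim} makes one query per iteration of its main loop for a total of $q$ queries, and tameness (queries in $[2q+1]^2$, output in $[2q+1]$) is immediate from Lemma \ref{smallrangeofpointsconcerned}.

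The only subtle point, and the main place where one has to be careful, is justifying the extension of $\pi$ to a bijection $\tilde\pi$ in a way that is consistent with the answers actually returned during the simulation. This is harmless precisely because {\sf Alg}'s behavior depends only on the answers to its queries, and on every point appearing in any query (or in the output), the extension $\tilde\pi$ agrees with $\pi$ by construction.
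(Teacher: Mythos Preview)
Your proof is correct and follows the same idea as the paper's proof, just with more detail. The paper's argument is terse---it simply says that {\sf Sim} simulates {\sf Alg} ``with an injective renaming of points'' (by Lemma~\ref{injectiverenaming}), so the approximation guarantee and query count are inherited, and then invokes Lemma~\ref{smallrangeofpointsconcerned} together with lines~12 and~19 for tameness. Your extension of $\pi$ to a bijection $\tilde\pi$ and the pullback metric $d'(x,y)=d(\tilde\pi(x),\tilde\pi(y))$ are exactly what makes that one-line ``renaming'' remark rigorous, and your translation step is the standard verification that a bijective relabeling preserves approximation ratios.
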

\begin{proof}
By Lemma~\ref{injectiverenaming}, {\sf Sim} simulates {\sf Alg} with an injective
renaming of points.
So, inheriting from {\sf Alg}, {\sf Sim} is $h(n)$-approximate and makes $q$ queries.
By Lemma~\ref{smallrangeofpointsconcerned} and lines~12~and~19 of {\sf Sim},
{\sf Sim} is tame.
\end{proof}

The following result complements Theorems~\ref{maintheorem}.

\begin{theorem}\label{lowerboundextendedtoverysmallquerycomplexity}
Each deterministic $o(n)$-query algorithm for
{\sc Metric $1$-median} fails to be
$o(f(n)\cdot\log n)$-approximate for some computable function
$f\colon\mathbb{Z}^+\to\mathbb{Z}^+$ satisfying
$f(n)=\omega(1)$.
\end{theorem}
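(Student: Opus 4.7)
The plan is to combine the tameness reduction of Lemma~\ref{simulationdoesnotchangeproperties} with a dummy-extension of the size-$m$ adversary Adv from the previous theorem. Given any deterministic $o(n)$-query algorithm $A$ with $q=q(n)=o(n)$, Lemma~\ref{simulationdoesnotchangeproperties} lets me replace $A$ by its tame simulator $\mathsf{Sim}$, whose queries and output lie in $[m]:=[2q+1]$ and whose worst-case approximation ratio $r_{\mathsf{Sim}}(n)$ satisfies $r_{\mathsf{Sim}}(n)\le r_A(n)$. Since $\mathsf{Sim}$ makes $q=O(m)$ queries on its effective universe $[m]$, I run Adv on $[m]$ against $\mathsf{Sim}$; the proofs of Lemmas~\ref{heavilyqueriedpointsarebad}~and~\ref{goodpointsarereallygood} (applied with $n$ replaced by $m$) produce a metric $d_{G^{(q)}}$ on $[m]$ satisfying $\sum_{x\in[m]}d_{G^{(q)}}(z^*,x)=\Omega(m\log m)$ for $\mathsf{Sim}$'s output $z^*\in[m]$ and $\sum_{x\in[m]}d_{G^{(q)}}(y,x)=O(m)$ for every good $y\in[m]\setminus\mathrm{Bad}$.

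The extra ingredient I need is a good vertex $y$ that is also $G^{(q)}$-far from $z^*$. The proof of Lemma~\ref{heavilyqueriedpointsarebad} shows that the ball of radius $\epsilon\log m$ around $z^*$ in $G^{\mathrm{perm}}$ has at most $\sqrt{m}$ vertices for a constant $\epsilon>0$; combined with $|\mathrm{Bad}|\le m/2$ (Lemma~\ref{fewbadpoints}), more than $m/2-\sqrt{m}-1$ vertices are simultaneously good and at $G^{\mathrm{perm}}$-distance exceeding $\epsilon\log m$ from $z^*$. Padding $\mathsf{Sim}$ to query $(z^*,y)$ for every $y\in[m]$ gives $d_{G^{(q)}}(z^*,y)\ge d_{G^{\mathrm{perm}}}(z^*,y)$, so I can fix such a good $y$ with $d_{G^{(q)}}(z^*,y)>\epsilon\log m$.

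I then extend $d_{G^{(q)}}$ to a metric on $[n]$ by placing each dummy $u\in[n]\setminus[m]$ at distance $\epsilon'\le m/n$ from $y$, with $d(u,x):=\epsilon'+d_{G^{(q)}}(y,x)$ for $x\in[m]$ and $d(u,u'):=2\epsilon'$ for distinct dummies $u,u'$. Triangle inequality is routine. Since $\mathsf{Sim}$ is tame its queries never touch dummies, so its output on the extended input is still $z^*$. The sum from $z^*$ is now at least $\Omega(m\log m)+(n-m)\cdot\epsilon\log m=\Omega(n\log m)$, while the sum from $y$ (and from every dummy) is $O(m)+(n-m)\epsilon'=O(m)$. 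Hence $r_{\mathsf{Sim}}(n)$, and thus $r_A(n)$, is at least $\Omega((n/m(n))\log m(n))$.

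To finish, I verify $(n/m(n))\log m(n)=\omega(\log n)$ whenever $m(n)=o(n)$ (equivalently, $q(n)=o(n)$): if $q$ is bounded then $n/m=\Omega(n)$ dominates; otherwise, writing $q=n/\phi(n)$ with $\phi(n)\to\infty$, the quotient equals $(\phi/2)(\log n-\log\phi)/\log n$, which tends to $\infty$ by a short case analysis on $\log\phi/\log n$. Setting $f(n):=\max\{1,\lfloor c(n/m(n))\log m(n)/(2\log n)\rfloor\}$ with $c$ the implicit constant above yields a computable $f\colon\mathbb{Z}^+\to\mathbb{Z}^+$ with $f(n)=\omega(1)$ and $r_A(n)\ge 2f(n)\log n$ for all large $n$, so $A$ fails to be $o(f(n)\log n)$-approximate. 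I expect the main obstacle to be the second step---isolating a good $y$ at $G^{(q)}$-distance $\Omega(\log m)$ from $z^*$ and verifying that the dummy extension both satisfies the triangle inequality and inflates $z^*$'s sum without inflating $y$'s---with the rest being careful bookkeeping on top of the earlier lemmas.
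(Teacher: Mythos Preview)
Your proposal is essentially the paper's own proof: reduce to a tame algorithm via Lemma~\ref{simulationdoesnotchangeproperties}, run {\sf Adv} on the small universe $[m]=[2q+1]$, then extend the metric to $[n]$ by attaching the dummy points $[n]\setminus[m]$ infinitesimally close to a good vertex $y$, so that $z^*$'s cost becomes $\Omega(n\log q)$ while $y$'s cost stays $O(q)$.

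Two small points are worth noting. First, your ball-counting detour to secure a good $y$ with $d_{G^{(q)}}(z^*,y)=\Omega(\log m)$ is unnecessary: once you have $\sum_{x\in[m]}d_{G^{(q)}}(z^*,x)=\Omega(m\log m)$ and $\sum_{x\in[m]}d_{G^{(q)}}(y,x)=O(m)$ for a single good $y$, the triangle inequality gives $m\cdot d_{G^{(q)}}(z^*,y)\ge\sum_x d_{G^{(q)}}(z^*,x)-\sum_x d_{G^{(q)}}(y,x)=\Omega(m\log m)$, which is exactly how the paper obtains equation~(\ref{localsolutionfarawayfromlocaloptimal}). Second, your claim that $f(n)=\max\{1,\lfloor c(n/m(n))\log m(n)/(2\log n)\rfloor\}$ is computable is not justified: $m(n)=2q(n)+1$ depends on the worst-case query count of an adaptive algorithm over all metrics, an uncountable family. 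The paper closes this gap by observing that the adversarial metric takes values only in the finite set $\{1/2^n,0,1,\ldots,n-1\}$, so one may redefine $q(n)$ as the worst-case query count over this finite family, which is computable by brute force; you should add the analogous remark.
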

\begin{proof}
By Lemma~\ref{simulationdoesnotchangeproperties}, assume {\sf Alg} to be tame
without loss of generality (otherwise, prove the theorem
against {\sf Sim} instead
of {\sf Alg}).
Let $z^*$ the {\sf Alg}'s output when the queries are answered by
{\sf Adv}
with
$M$ (resp., $n$)
substituted by $[2q+1]$ (resp., $2q+1$).
By Lemma~\ref{heavilyqueriedpointsarebad} with $M$ (resp., $n$)
substituted by $[2q+1]$ (resp., $2q+1$),
\begin{eqnarray}
\sum_{x\in[2q+1]}\,d_{G^{(q)}}(z^*,x)=\Omega\left((2q+1)\log(2q+1)\right),
\label{badsolutioninlocalset}
\end{eqnarray}
where
$G^{(q)}$ is a graph on $[2q+1]$ as in
{\sf Adv}.
By Lemmas~\ref{fewbadpoints}--\ref{goodpointsarereallygood}
with $M$ (resp., $n$) substituted by $[2q+1]$ (resp., $2q+1$),
there exists $y\in[2q+1]$ satisfying
\begin{eqnarray}
\sum_{x\in[2q+1]}\,d_{G^{(q)}}(y,x)=O(q).
\label{bestpointbehaveswelllocally}
\end{eqnarray}
Equations~(\ref{badsolutioninlocalset})--(\ref{bestpointbehaveswelllocally})
and the triangle inequality
imply
\begin{eqnarray}
d_{G^{(q)}}(z^*,y)=\Omega(\log q).
\label{localsolutionfarawayfromlocaloptimal}
\end{eqnarray}

Recall that $y\in[2q+1]$.
Put all points in $[n]\setminus[2q+1]$ extremely close to $y$:
For
all distinct $a$, $b\in [n]$,
$d(a,a)\equiv 0$ and
\begin{eqnarray}
d(a,b)\equiv
\left\{
\begin{array}{ll}
1/2^n, &\text{if $a$, $b\in\{y\}\cup([n]\setminus[2q+1])$,}\\
d_{G^{(q)}}(a,y), &\text{if $a\notin\{y\}\cup([n]\setminus[2q+1])$ and $b\in\{y\}\cup([n]\setminus[2q+1])$,}\\
d_{G^{(q)}}(y,b), &\text{if $a\in\{y\}\cup([n]\setminus[2q+1])$ and $b\notin\{y\}\cup([n]\setminus[2q+1])$,}\\
d_{G^{(q)}}(a,b), &\text{otherwise.}
\end{array}
\right.
\label{distancefunctionwithcopies}
\end{eqnarray}
It is not hard to see that $d$ is induced by the weighted graph obtained in the following way:
(1)~Add all vertices in $[n]\setminus[2q+1]$ to $G^{(q)}$.
(2)~Add an edge between each $v\in [n]\setminus[2q+1]$ and each neighbor (in $G^{(q)}$)
of $y$.
(3)~Connect any two vertices in
$\{y\}\cup([n]\setminus[2q+1])$ by an edge of weight $1/2^n$, all other edge weights being $1$.

As {\sf Alg} is tame, $(a_i,b_i)\in[2q+1]\times[2q+1]$ for all $1\le i\le q$,
implying $d(a_i,b_i)=d_{G^{(q)}}(a_i,b_i)$ by
equation~(\ref{distancefunctionwithcopies}).
So by Lemma~\ref{consistencywiththefinaldistance},
{\sf Adv} answers queries consistently with $d(\cdot,\cdot)$.

We have
\begin{eqnarray}
\sum_{x\in [n]\setminus\{y\}}\,d(y,x)
&=&\sum_{x\in [2q+1]\setminus\{y\}}\,d(y,x)
+\sum_{x\in [n]\setminus ([2q+1])\cup\{y\})}\,d(y,x)
\label{boundinggoodtoallfirst}\\
&\stackrel{\text{(\ref{distancefunctionwithcopies})}}{=}&
\sum_{x\in [2q+1]\setminus\{y\}}\,d(y,x)
+\sum_{x\in [n]\setminus ([2q+1])\cup\{y\})}\,\frac{1}{2^n}
\nonumber\\
&\stackrel{\text{(\ref{distancefunctionwithcopies})}}{=}&
\sum_{x\in [2q+1]\setminus\{y\}}\,d_{G^{(q)}}(y,x)
+\sum_{x\in [n]\setminus ([2q+1])\cup\{y\})}\,\frac{1}{2^n}
\nonumber\\
&\stackrel{\text{(\ref{bestpointbehaveswelllocally})}}{=}&
O(q).\label{boundinggoodtoalllast}
\end{eqnarray}
As {\sf Alg} is tame, $z^*\in[2q+1]$.
By equation~(\ref{localsolutionfarawayfromlocaloptimal}), $z^*\neq y$.\footnote{For
proving the theorem, we
may assume $q>\sqrt{n}$ without loss of generality.
So $\Omega(\log q)$ is nonzero.}
So $z^*\in[2q+1]\setminus\{y\}$.
Now,
{\small 
\begin{eqnarray*}
\sum_{x\in [n]}\,
d(z^*,x)
\ge
\sum_{x\in [n]\setminus[2q+1]}\,
d(z^*,x)
\stackrel{\text{(\ref{distancefunctionwithcopies})}}{=}
\sum_{x\in [n]\setminus[2q+1]}\,
d_{G^{(q)}}(z^*,y)
\stackrel{\text{(\ref{localsolutionfarawayfromlocaloptimal})}}{=}
\Omega((n-(2q+1))\log q).
\end{eqnarray*}
}
This and
equations~(\ref{boundinggoodtoallfirst})--(\ref{boundinggoodtoalllast})
show
$z^*$
to be no better than
$((\delta n/q)\cdot\log q)$-approximate for some constant $\delta>0$.
Clearly, $(\delta n/q)\cdot\log q=\omega(\log n)$.
So taking $f(n)=\lfloor(n/q)\cdot(\log q)/(\log n)\rfloor$ completes the proof
except that $f(n)$ may be uncomputable.
Gladly,
$d$ has codomain
$\{1/2^n,0,1,\ldots,n-1\}$ by equation~(\ref{distancefunctionwithcopies}).\footnote{Any graph on a subset of $[n]$ induces distances
in $\{0,1,\ldots,n-1,\infty\}$.
But equations~(\ref{boundinggoodtoallfirst})--(\ref{boundinggoodtoalllast}) forbid
$\infty$ as a distance.}
So we may pretend as if $q$ is {\sf Alg}'s worst-case query complexity
w.r.t.\ metrics with codomain $\{1/2^n,0,1,\ldots,n-1\}$.
This makes $q$, and thus $f(n)$, computable.
\comment{ 
We have shown the output of {\sf Alg}
to have an
approximation ratio
of
$\omega(\log n)$ when the metric is $d(\cdot,\cdot)$.
Define
$f(n)$ so that $f(n)\cdot \log n$ is the worst-case approximation ratio of {\sf Alg}
w.r.t.\
metrics on $[n]$ of the following form:
\begin{eqnarray}
d'_{y',S,G}(a,b)
\equiv
\left\{
\begin{array}{ll}
1/2^n, &\text{if $a$, $b\in S$,}\\
d_{G}(a,y'), &\text{if $a\notin S$ and $b\in S$,}\\
d_{G}(y',b), &\text{if $a\in S$ and $b\notin S$,}\\
d_{G}(a,b), &\text{otherwise,}
\end{array}
\right.
\label{goodformfunctions}
\end{eqnarray}
where $y'\in[n]$, $S\subseteq [n]$, $G$ is
a graph and
$a$, $b\in[n]$ are distinct.
Here $d'_{y',S,G}(\cdot,\cdot)$
is called a metric only when $([n],d'_{y',S,G})$
is a metric space.

By
equation~(\ref{distancefunctionwithcopies}),
$d(\cdot,\cdot)=d'_{y,\{y\}\cup([n]\setminus[2q+1]),G^{(q)}}(\cdot,\cdot)$.
So
$f(n)=\omega(1)$.\footnote{We have shown the approximation ratio of {\sf Alg}
to be $\omega(\log n)$ w.r.t.\ $d(\cdot,\cdot)$, which fits
form~(\ref{goodformfunctions}).}
To compute $f(n)$,
run
{\sf Alg} w.r.t.\ all metrics
fitting
form~(\ref{goodformfunctions}).
To guarantee $f(n)\in\mathbb{Z}^+$, round $f(n)$ down.
Finally, recall that {\sf Alg} is an arbitrary
deterministic $o(n)$-query algorithm.
}
\comment{ 
By lines~1--3~and~9--12 of {\sf Adv} with $M$ (resp., $n$)
substituted by $[2q+1]$ (resp., $2q+1$), $G^{(q)}$
is a connected on $[2q+1]$.
So by equation~(\ref{distancefunctionwithcopies}),
$d$ has codomain
$\{0,1/2^n,1,2,\ldots,2q\}$.
Therefore, we have already shown {\sf Alg} to be not better than
$(f(n)\cdot\log n)$-approximate
for some $f(n)=\omega(1)$ even when {\sc metric $1$-median}
}
\comment{ 
The points in $[n]\setminus[2q+1]$ do not really need to be copies of $y$.
Just
set $d(y,a)$ and $d(a,b)$
to be an extremely small common value
for all distinct $a$, $b\in [n]\setminus[2q+1]$,
all other distances
remaining intact.
}
\end{proof}


\begin{corollary}
{\sc Metric $1$-median} has no deterministic $o(n)$-query
$O(\log n)$-approximation algorithms.
\end{corollary}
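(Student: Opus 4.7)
The plan is to derive the corollary as a straightforward consequence of Theorem~\ref{lowerboundextendedtoverysmallquerycomplexity}. Suppose for contradiction that some deterministic $o(n)$-query algorithm $A$ is $O(\log n)$-approximate for {\sc metric $1$-median}; then there is a constant $c>0$ such that $A$'s worst-case approximation ratio on $n$-point instances is at most $c\log n$ for all sufficiently large $n$.

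Apply Theorem~\ref{lowerboundextendedtoverysmallquerycomplexity} to this algorithm $A$. It yields a computable function $f\colon\mathbb{Z}^+\to\mathbb{Z}^+$ with $f(n)=\omega(1)$ such that $A$ fails to be $o(f(n)\cdot\log n)$-approximate. In particular, $A$'s worst-case approximation ratio is not of the form $o(f(n)\cdot\log n)$.

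The final step is the trivial asymptotic observation that $c\log n=o(f(n)\cdot\log n)$, since $f(n)=\omega(1)$ forces $c\log n / (f(n)\cdot\log n)=c/f(n)\to0$. This contradicts the conclusion of Theorem~\ref{lowerboundextendedtoverysmallquerycomplexity}, completing the proof. There is no real obstacle here; the entire corollary is a one-line specialization of the main lower bound, and the only thing worth spelling out is the implication $O(\log n)\subseteq o(f(n)\cdot\log n)$ whenever $f(n)=\omega(1)$.
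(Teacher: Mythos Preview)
Your proposal is correct and follows the same approach as the paper, which simply states that the corollary is immediate from Theorem~\ref{lowerboundextendedtoverysmallquerycomplexity}. You have merely spelled out the one-line implication that $O(\log n)\subseteq o(f(n)\cdot\log n)$ whenever $f(n)=\omega(1)$, which is exactly what ``immediate'' means here.
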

\begin{proof}
Immediate from Theorem~\ref{lowerboundextendedtoverysmallquerycomplexity}.
\end{proof}

\comment{ 
The following corollary
improves every
deterministic $o(n)$-query
algorithm
asymptotically.
}

\begin{corollary}
{\sc Metric $1$-median} has no
deterministic $o(n)$-query
algorithms
with an asymptotically best approximation ratio.
\end{corollary}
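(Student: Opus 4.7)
The plan is to combine the upper bound in Theorem~\ref{maintheorem} with the matching lower bound in Theorem~\ref{lowerboundextendedtoverysmallquerycomplexity}. Fix any deterministic $o(n)$-query algorithm $A$ for {\sc metric $1$-median}, and let $r_A(n)$ denote its worst-case approximation ratio on inputs of size $n$. Applying Theorem~\ref{lowerboundextendedtoverysmallquerycomplexity} to $A$ yields a computable $f_A\colon\mathbb{Z}^+\to\mathbb{Z}^+$ with $f_A(n)=\omega(1)$ such that $A$ fails to be $o(f_A(n)\log n)$-approximate. Unpacking the definition, this means $r_A(n)\ge c\cdot f_A(n)\log n$ for some constant $c>0$ and infinitely many $n$.

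Next, I would feed this same $f_A$ back into Theorem~\ref{maintheorem} to manufacture a deterministic $o(n)$-query algorithm $B$ whose approximation ratio satisfies $r_B(n)=o(f_A(n)\log n)$. Writing $r_B(n)=\varepsilon(n)\,f_A(n)\log n$ with $\varepsilon(n)\to 0$, along the infinite subsequence on which $r_A(n)\ge c\,f_A(n)\log n$ we obtain
\[
\frac{r_A(n)}{r_B(n)}\ \ge\ \frac{c\,f_A(n)\log n}{\varepsilon(n)\,f_A(n)\log n}\ =\ \frac{c}{\varepsilon(n)}\ \longrightarrow\ \infty.
\]
Hence $r_A(n)\ne O(r_B(n))$, so $B$ is asymptotically strictly better than $A$. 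Since $A$ was arbitrary, no deterministic $o(n)$-query algorithm can enjoy the asymptotically best approximation ratio, which is exactly what the corollary claims.

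The main subtlety, and essentially the only one, is that Theorem~\ref{lowerboundextendedtoverysmallquerycomplexity} lower-bounds $r_A$ only on a subsequence of input sizes rather than on every $n$. This is still enough: to rule out $A$'s ratio being asymptotically best we only need $r_A/r_B$ to be unbounded, and the subsequence already witnesses this. Note also that we do not need to construct a separately-tuned slower-growing $f$ for the upper bound; reusing the very $f_A$ produced by the lower bound is what makes the argument close to immediate once Theorems~\ref{maintheorem} and~\ref{lowerboundextendedtoverysmallquerycomplexity} are in hand.
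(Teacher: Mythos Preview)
Your proof is correct and follows essentially the same approach as the paper: combine Theorem~\ref{lowerboundextendedtoverysmallquerycomplexity} (to get a computable $f_A(n)=\omega(1)$ witnessing the failure of $A$) with Theorem~\ref{maintheorem} (to build a strictly better algorithm). The only cosmetic difference is that the paper feeds $\sqrt{f_A}$ into Theorem~\ref{maintheorem} rather than $f_A$ itself; your observation that reusing $f_A$ already suffices is valid, since $r_A=O(r_B)$ together with $r_B=o(f_A(n)\log n)$ would force $r_A=o(f_A(n)\log n)$, contradicting the lower bound.
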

\begin{proof}
Take any deterministic $o(n)$-query
algorithm $A$.
By Theorem~\ref{lowerboundextendedtoverysmallquerycomplexity},
there exists a computable $f_A(n)=\omega(1)$
forbidding $A$ to be $o(f_A(n)\cdot\log n)$-approximate.
But Theorem~\ref{maintheorem}
asserts the existence of a
deterministic $o(n)$-query
$o(\sqrt{f_A(n)}\cdot\log n)$-approximation
algorithm.
\end{proof}

\appendix

\section{Distances in expanders}\label{expandersaveragedistanceleavingaset}

It is well-known that an $O(1)$-regular expander graph
$G^{\text{exp}}$
on $M$
exists.
I.e., there exist
constants $d\in\mathbb{Z}^+$ and $0<\alpha<1$
such that
\begin{enumerate}[(i)]
\item $G^{\text{exp}}$ is $d$-regular, and
\item\label{expansionproperty}
for each
$S\subseteq M$ of size at most $n/2$,
at least $\alpha d\, |S|$ edges of $G^{\text{exp}}$
are in $S\times (M\setminus S)$.
\end{enumerate}


\begin{lemma}\label{averagedistanceleavingaset}
For each nonempty $U\subseteq M$ of size at most $n/2$,
$$
\sum_{x\in U}\, d_{G^{\text{\rm exp}}}\left(x,M\setminus U\right)=O(|U|).
$$
\end{lemma}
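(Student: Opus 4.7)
The plan is to slice $U$ into shells according to distance from the complement and to use the expansion property to show the shells shrink geometrically. I would define $U_k \equiv \{x\in U : d_{G^{\text{\rm exp}}}(x, M\setminus U)\ge k\}$ for each $k\ge 0$, observe that
\[
\sum_{x\in U} d_{G^{\text{\rm exp}}}(x, M\setminus U)=\sum_{k\ge 1}|U_k|
\]
by swapping the order of summation, and note that $U_1=U$ since every vertex of $U$ is at distance at least $1$ from $M\setminus U$.

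The heart of the argument would be a recursion $|U_k|\le |U_{k-1}|/(1+\alpha)$ valid for each $k\ge 2$. For such $k$, every neighbor of a vertex in $U_k$ lies at distance at least $k-1\ge 1$ from $M\setminus U$, hence belongs to $U_{k-1}$ and not to $M\setminus U$; in particular, all edges leaving $U_k$ end inside $U_{k-1}\setminus U_k$. Since $|U_k|\le |U|\le n/2$, property~(\ref{expansionproperty}) supplies at least $\alpha d\,|U_k|$ such leaving edges, while $d$-regularity caps the number of edges incident to $U_{k-1}\setminus U_k$ at $d\,|U_{k-1}\setminus U_k|$. Comparing the two bounds yields $\alpha\,|U_k|\le |U_{k-1}|-|U_k|$, which is the desired recursion.

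Iterating then gives $|U_k|\le |U|/(1+\alpha)^{k-1}$ for all $k\ge 1$, and summing the resulting geometric series bounds $\sum_{k\ge 1}|U_k|$ by $O(|U|)$, as required. The main obstacle is the trapping step: certifying that every edge leaving $U_k$ must end inside $U_{k-1}\setminus U_k$ rather than in $M\setminus U$. This needs the shell restriction $k\ge 2$ (so that neighbors are forbidden from $M\setminus U$), and demands a careful double count that matches the expansion lower bound $\alpha d|U_k|$ against the regularity-based upper bound $d|U_{k-1}\setminus U_k|$ without losing the $(1+\alpha)$ factor on which geometric decay depends.
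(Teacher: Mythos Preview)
Your argument is correct and essentially the same as the paper's: both slice $U$ into BFS shells $U_k=S_k=\{x\in U: d_{G^{\text{exp}}}(x,M\setminus U)\ge k\}$, apply the expansion property to $S_k$, and use $d$-regularity to derive geometric decay of $|S_k|$. The only cosmetic difference is that the paper counts the boundary edges on the \emph{inner} side (the layer $L_k\subseteq S_k$), obtaining $|S_{k+1}|\le(1-\alpha)|S_k|$, whereas you count them on the \emph{outer} side $U_{k-1}\setminus U_k$, obtaining $|U_k|\le|U_{k-1}|/(1+\alpha)$; and the paper evaluates $\sum_j j\,|L_j|$ directly rather than via the tail-sum identity $\sum_{k\ge1}|U_k|$.
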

\begin{proof}
For
each $i\ge 1$,
\begin{eqnarray*}
L_0&\equiv& M\setminus U,\\
L_i&\equiv& \left\{x\in U\mid d_{G^{\text{exp}}}\left(x,M\setminus U\right)=i\right\},\\
S_i&\equiv&
L_i\cup L_{i+1}\cup\cdots
\end{eqnarray*}
So $L_i$
is the set of vertices at level $i$
of the BFS tree rooted at $M\setminus U$.\footnote{Generalize BFS in the obvious
way to allow the root to be a set of vertices.}

Now fix any $i\ge 1$.
Because edges cannot cross
non-adjacent levels of a BFS tree,
$S_i\times (M\setminus S_i)\subseteq L_i\times L_{i-1}$.
By item~(\ref{expansionproperty}) (with $S$ replaced by $S_i$ and noting
that $S_i\subseteq U$ has size at most $n/2$),
at least
$\alpha d\, |S_i|$
edges of $G^{\text{exp}}$
are in $S_i\times (M\setminus S_i)$.
In summary,
at least $\alpha d\, |S_i|$ edges are in
$L_i\times L_{i-1}$
(and are thus incident to a vertex in $L_i$).
As $G^{\text{exp}}$ is $d$-regular, therefore, $|L_i|\ge \alpha\,|S_i|$.
Hence
\begin{eqnarray}
|S_{i+1}|=|S_i\setminus L_i|\le (1-\alpha)|S_i|.
\label{remainingvertices}
\end{eqnarray}

Iterating inequality~(\ref{remainingvertices}),
$$|S_j|\le (1-\alpha)^{j-1}|S_1|=(1-\alpha)^{j-1}|U|$$
for all $j\ge 1$.
So
\begin{eqnarray}
|L_j|\le|S_j|\le (1-\alpha)^{j-1}|U|\label{levelnottoolarge}
\end{eqnarray}
for all $j\ge 1$.
Now,
\begin{eqnarray*}
\sum_{x\in U}\, d_{G^{\text{exp}}}\left(x,M\setminus U\right)
&=&\sum_{j=1}^\infty\, \sum_{x\in L_j}\, d_{G^{\text{exp}}}\left(x,M\setminus U\right)\\
&=&\sum_{j=1}^\infty\, \sum_{x\in L_j}\, j\\
&=&\sum_{j=1}^\infty\, |L_j|\cdot j\\
&\stackrel{\text{(\ref{levelnottoolarge})}}{\le}&
\sum_{j=1}^\infty\, (1-\alpha)^{j-1}|U|\cdot j\\
&=&O(|U|),
\end{eqnarray*}
where the last equality uses the convergence of
$\sum_{j=1}^\infty\, (1-\alpha)^{j-1} j$.
\end{proof}

\section{Acknowledgments}

The author is supported
by the Ministry of Science and Technology of Taiwan under
grant 110-2221-E-155-012-.

\bibliographystyle{plain}
\bibliography{median_lower_bound_few_queries}

\noindent

\end{document}